\def\beginaalgo{\begin{minipage}{5in}\vspace{0.1cm}\normalfont\begin{tabbing}
       \quad\=\qquad\=\qquad\=\qquad\=\qquad\=\qquad\=\qquad\=\qquad\=\qquad\=\qquad\=\qquad\=\qquad\=\qquad\=\kill}
\def\endaalgo{\end{tabbing}\vspace{0.1cm}\end{minipage}}
\newenvironment{algorithm}
{\begin{tabular}{|l|}\hline\beginaalgo}
{\endaalgo\\\hline\end{tabular}}
\title{$k$-Median clustering under discrete Fr\'{e}chet and Hausdorff distances \thanks{The authors would like to thank Pankaj Agarwal, Kamesh Munagala and anonymous reviewers for helpful discussions and feedback. A shorter version of this paper will appear in SoCG 2020.}}
\author{
Abhinandan Nath\footnote{Part of the work was done when the author was a graduate student at Duke University}\\
    Mentor Graphics, Fremont, USA\\
    \texttt{abnath@mentor.com} \and
Erin Taylor\\
    Duke University, Durham, USA\\
    \texttt{ect15@cs.duke.edu}
}
\date{}
\def\eps{\varepsilon}
\newcommand{\etal}{\textit{et~al.}}
\def\dist{\mathtt{d}}
\newcommand{\E}    {\mathbb{E}}
\newcommand{\A}		{a}
\newcommand{\G}		{b}
\newcommand{\X}		{{\EuScript{X}}}
\newcommand{\C}		{{\EuScript{C}}}
\newcommand{\D}		{t}
\newcommand{\Q}          {h}
\newcommand{\R}          {\theta}
\newcommand{\T}          {{\EuScript{T}}}
\newcommand{\U}          {{\EuScript{U}}}
\newcommand{\ball}      {{\EuScript{B}}}
\newcommand{\reals}    {\mathbb{R}}
\newcommand{\argmin}{\mathop{\mathrm{arg\,min}}}
\newcommand{\ptset}   {\zeta}
\newcommand{\traj}      {\gamma}
\newcommand{\norm}[1]{\left\lVert#1\right\rVert}
\renewcommand\subparagraph{%
 \@startsection {subparagraph}{5}{\z@ }{3.25ex \@plus 1ex
 \@minus .2ex}{-1em}{\normalfont \normalsize \bfseries }}%
\newtheorem{theorem}{Theorem}
\newtheorem{corollary}[theorem]{Corollary}
\newtheorem{lemma}[theorem]{Lemma}
\theoremstyle{definition}
\newtheorem{definition}[theorem]{Definition}
\begin{document}

\maketitle
\begin{abstract}
We give the first near-linear time  $(1+\eps)$-approximation algorithm for $k$-median clustering of polygonal trajectories under the discrete Fr\'{e}chet distance, and the first polynomial time $(1+\eps)$-approximation algorithm for $k$-median clustering of finite point sets under the Hausdorff distance, provided the cluster centers, ambient dimension, and $k$ are bounded by a constant. The main technique is a general framework for solving clustering problems where the cluster centers are restricted to come from a \emph{simpler} metric space. We precisely characterize conditions on the simpler metric space of the cluster centers that allow faster $(1+\eps)$-approximations for the $k$-median problem. We also show that the $k$-median problem under Hausdorff distance is \textsc{NP-Hard}.
\end{abstract}

\section{Introduction}
\label{sec:intro}
We study the $k$-median problem for an arbitrary metric space $\X = (X,\dist)$, where the cluster centers are restricted to come from a (possibly infinite) subset $C \subseteq X$. We call it the $(k,C)$-median problem. We prove general conditions on the structure of $C$ that allow us to get efficient $(1+\eps)$-approximation algorithms for the $(k,C)$-median problem for any $\eps > 0$. As applications of our framework, we give $(1+\eps)$-approximation algorithms for the metric space defined over polygonal trajectories and finite point sets in $\reals^d$ under the discrete Fr\'{e}chet and Hausdorff distance respectively, where the cluster centers have bounded complexity. For trajectories, our algorithm runs in near-linear time in the number of input points (Theorem~\ref{thm:frechet_clustering}) and is exponentially faster than the previous best algorithm (\cite{BDS19}, Theorem 11). For point sets, ours is the first $(1+\eps)$-approximation algorithm that runs in time polynomial in the number of input points (Theorem~\ref{thm:hausdorff_clustering}) for bounded dimensions and cluster complexity. Our results are summarized in Table~\ref{table:results}. We also show that the $k$-median problem under Hausdorff distance problem is \textsc{NP-Hard}.

\renewcommand{\arraystretch}{1.5}
\begin{table}[h]
\begin{center}
\caption{Our results for $(1+\eps)$-approximate $k$-median for $n$ input trajectories/point sets in $\reals^d$, each having at most $m$ points. Each cluster center can have at most $l$ points. While stating running times,
we assume $k,l$ and $d$ are constants independent of $n,m$, and $\tilde{O}$ hides logarithmic factors in $n,m$.}
\label{table:results}
\begin{tabular}
{ | m{0.4\textwidth} | m{0.2\textwidth} | m{0.2\textwidth} | }
\hline
Metric space & Our result & Previous best\\
\hline
Polygonal traj., discrete Fr\'{e}chet &
$\tilde{O}\left(nm \right)$&
$\tilde{O}\left(n^{dkl+1}m\right)$~\cite{BDS19}\\
\hline
Point sets, Hausdorff &
$nm^{O(dl)}$&
--\\
\hline
\end{tabular}
\end{center}
\end{table}

The $k$-median problem has been very widely studied. We are given a set $P$ of $n$ elements from a metric space. The goal is now to select $k$ \emph{centers} so that the sum of the distances of each point to the nearest cluster center is minimized.
In the simplest setting, $P \subseteq \reals^d$ under the Euclidean metric. In this paper, each individual element of $P$ is itself a collection of points in $\reals^d$, e.g., a curve traced by a moving object, or a point cloud. Since the objective of clustering is to group \emph{similar} objects into the same cluster and to summarize each cluster using its cluster center, it is important that a meaningful distance function is used to compare two input elements. In our work, we look at the widely used discrete Fr\'{e}chet and Hausdorff distances for trajectories and point sets respectively.

Trajectories can model a variety of systems that change with time. As such, trajectory data is being collected at enormous scales. As a first step, clustering is hugely important in understanding and summarizing the data. It involves partitioning a set of trajectories into clusters of similar trajectories, and computing a representative trajectory (a center) per cluster. It can be viewed as a compression scheme for large trajectory datasets, effectively performing non-linear dimension reduction. If the centers have low complexity, this representation can reduce uncertainty and noise found in individual trajectories. Information provided by the set of centers is useful for trajectory analysis applications such as similarity search and anomaly detection~\cite{sung2012trajectory}.  The Hausdorff distance is another widely used shape-based distance~\cite{besse2016,huttenlocher1993comparing}. In shape matching applications, we may want to cluster similar shapes into one group (where a shape is represented by a point cloud).

\begin{figure}
\caption{The red, green and blue trajectories are similar to each other and form a single cluster. If the cluster center trajectory is restricted to have four vertices, it will look like the black trajectory at the bottom. However if the center is unrestricted, it will contain a lot of vertices inside the four noticeable noisy \emph{clumps} of vertices of the input trajectories, thereby overfitting to the input.}
\label{fig:overfitting}
\centering
\includegraphics[width=0.3\textwidth]{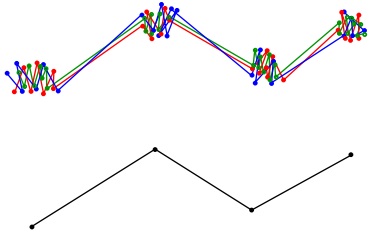}
\end{figure}

The $k$-median problem is hard to solve exactly, even in Euclidean space~\cite{feder1988optimal, MS84}. There is a long line of work on both constant factor and $(1+\eps)$-approximations, with varying running time dependence on $k$ and the ambient dimension (if applicable). Many of these algorithms require the underlying metric space to have bounded doubling dimension (e.g., see~\cite{ABS10}). However, it can be shown that both the discrete Fr\'{e}chet and Hausdorff distances do not have doubling dimension bounded by a constant (Appendix~\ref{app:frecher_hausdorff_doubling_dimension}).
We circumvent this problem by considering cluster centers from a somewhat \emph{simpler} metric space compared to the input metric space. For trajectories and point sets, we restrict the centers to have low complexity, i.e., a bounded number of points. This approach has been used before (\cite{BDGHKL19,BDS19,DKS16}). It has the added benefit of preventing the cluster center from overfitting to the elements of its cluster. This is crucial, since real-life measurements are noisy and error-prone, and without any restrictions the cluster center can inherit noise and high complexity from the input (see Fig.~\ref{fig:overfitting}). As another example, in clustering financial time-series data using Hausdorff distance~\cite{BBDFPP07}, frequent intra-day fluctuations may not be useful in capturing long-term trends, and we want to avoid them by retricting the cluster centers' complexity. However, our work differs from previous approaches in that we precisely characterize general conditions on the simpler metric space for the cluster centers, which leads to faster $(1+\eps)$-approximation algorithms for the $k$-median problem for the discrete Fr\'{e}chet and Hausdorff distances.

\subparagraph*{Problem definition.} A \emph{\textbf{metric space}} $\X = (X,\dist)$ consists of a set $X$ and a distance function $\dist : X \times X \rightarrow \reals_{\ge 0}$ that satisfies the following properties : (i)~$\dist(x,x) = 0$ for all $x \in X$;  (ii)~$\dist(x,y) = \dist(y,x)$ for all $x,y \in X$; and (iii)~$\dist(x,z) \le \dist(x,y) + \dist(y,z)$ for all $x,y,z \in X$.

Given subsets $P, C \subseteq X$, the \emph{\textbf{$(k,C)$-median}} problem is to compute a set $C' \subseteq C$ of $k$ \emph{center} points that minimizes
\begin{align*}
\sum_{p \in P} \dist(p, C'),
\end{align*}
where $\dist(p,C') = \min_{c \in C'} \dist(p,c)$. Here $P$ is finite, but $C$ need not be.

Let $T^l$ (resp. $U^l$) be the set of all trajectories (resp. point sets) in $\reals^d$, where each trajectory (resp. point set) has at most $l$ points. Thus, $T = \bigcup_{l>0} T^l$ and $U = \bigcup_{l>0} U^l$ are the set of all trajectories and finite point sets in $\reals^d$ respectively.
As special cases of the $(k,C)$-median problem, we discuss the $(k,T^l)$-median and the $(k,U^l)$-median problems for the metric spaces $\T = (T,\dist_F)$ and $\U = (U,\dist_H)$ respectively, i.e., each center trajectory or point set can have at most $l$ points. Here, $\dist_F$ and $\dist_H$ denote the discrete Fr\'{e}chet and Hausdorff distances respectively.

\subparagraph*{Challenges and ideas.} As mentioned before, both the discrete Fr\'{e}chet and Hausdorff metrics do not have low doubling dimension, so a number of previous techniques do not directly apply to our setting to yield efficient algorithms. One approach would be to embed these metrics into other metric spaces. Backurs and Sidiropoulos~\cite{BS16} give an embedding of the Hausdorff metric over point sets of size $s$ in $d$-dimensional  Euclidean space, into $l_{\infty}^{s^{O(s+d)}}$ with distortion $s^{O(s+d)}$. However, both the distortion and the resultant dimension are too high for many applications. It is not known if the Fr\'{e}chet distance can be embedded into an $l_p$ space using finite dimension.

We circumvent the problem by restricting the cluster centers to come from a subset $C$ of the original space $X$, namely trajectories and point sets defined by a bounded number of points each. We show that if every metric ball in $C$ can be \emph{covered} by a small number of metric balls of a fixed smaller radius, then we can use a sampling-based algorithm similar to the one in Ackermann \etal~\cite{ABS10}; we make this precise by introducing the notion of \emph{coverability} of $C$. We crucially show that the centers of the balls in the \emph{cover} can be arbitrary, and need not  come from $C$. This is more general than $C$ having bounded doubling dimension. This allows us to approximate the optimal $(1,C)$-median of $P$ using the $(1,C)$-median of a constant sized random sample of $P$, allowing us to use the framework of~\cite{ABS10}.

It is not known how to efficiently compute the optimal $(1,C)$-median under the discrete Fr\'{e}chet and Hausdorff distances. However, we show that all we need is to compute a constant number of candidate centers in time independent of the size of the input, at least one of which is a good  approximation to the optimal $(1,C)$-median of the input. We show how to compute these candidates for a \emph{coverable} set $C$. Then, we can apply the sampling technique of \cite{ABS10} and recursively use this property to find $k$ centers that approximate the cost of the $(k, C)$-median optimal solution.
Although our work heavily relies on the framework of Ackermann et al.~\cite{ABS10}, it is a significant improvement from existing work on clustering under the discrete Fr\'{e}chet distance, and the first such result for clustering under the Hausdorff distance.

\subparagraph*{Previous work.}
Trajectory clustering has a lot of practical applications, e.g., discovering frequent movement patterns in trajectory data. As such, there has been work on trajectory clustering~\cite{GS99,HPL15,XZLZ15}, and possibly computing a representative trajectory for each cluster. Many proposed algorithms and models have no provable performance gaurantees and are experimental in nature.

Driemel \etal~\cite{DKS16} started the rigorous study of clustering trajectories under the continuous Fr\'{e}chet distance under the classic $k$-clustering objectives. However, they only deal with 1D-trajectories. They introduce the $(k,l)$-clustering problem, i.e., clustering trajectories with $k$ cluster centers such that each center can have at most $l$ points. For $1D$-trajectories, they give $(1+\eps)$-approximation algorithms for both the $(k,l)$-center and $(k,l)$-median problem that run in near-linear time for constant $\eps,k,l$. Buchin \etal~\cite{BDGHKL19} study the $(k,l)$-center clustering problem for trajectories in $\reals^d$ under the discrete and continuous Fr\'{e}chet distances, and give both upper and lower bounds. The most closely related work to ours is the one by Buchin, Driemel and Struijs~\cite{BDS19}, where they give algorithms for the $(k,l)$-median problem under discrete Fr\'{e}chet; however their running times are much slower (see Table~\ref{table:results}). They also show that the $1$-median problem under discrete Fr\'{e}chet distance is \textsc{NP-Hard}, and \textsc{W[1]-Hard} in the number of input trajectories.

On the other hand, clustering under Hausdorff distance has received much less attention. There is work on hierarchical clustering of financial time series data using Hausdorff distance~\cite{BBDFPP07}. Chen \etal~\cite{CWLS11} use the DBSCAN algorithm~\cite{EKSX96}  while Qu \etal~\cite{QZC09} use spectral clustering for trajectories and using the Hausdorff distance.  We are not aware of any theoretical analysis for $k$-median clustering of point sets under the Hausdorff distance.

In general metric spaces, a polynomial time $(1+\sqrt{3}+\eps)$-approximation algorithm to the $k$-median problem exists~\cite{LS16}, whereas no polynomial time algorithm can achieve an approximation ratio less than $(1+2/e)$ unless \textsc{NP} $\subseteq$ \textsc{DTIME} $[n ^{O(\log \log n)}]$~\cite{JMS02}. For the Euclidean $k$-median problem in $d$ dimensions, Guruswami and Indyk~\cite{GI03} showed that there is no PTAS for $k$-median if both $k$ and $d$ are part of the input. Arora \etal~\cite{ARR98} gave the first PTAS when $d$ is fixed, whose running time was subsequently improved in~\cite{KR07}. Kumar \etal~\cite{KSS10} gave a $(1+\eps)$-approximate algorithm with runtime $2^{(k/\eps)^{O(1)}}dn$, this was extended by Ackermann \etal~\cite{ABS10} to those metric spaces for which the optimal $1$-median can be approximated using a constant-sized random sample; this holds true for doubling metric spaces. There are coreset-based approaches with running times linear in $n$ and either exponential in $d$ and polynomial in $k$~\cite{HM04,HK07}, or vice versa~\cite{BHI02}. Recently, Cohen-Addad \etal~\cite{CKM19} gave a PTAS for $k$-median in low-dimensional Euclidean and minor-free metrics using local search.

\section{Preliminaries, definitions and an overview}
\label{sec:prelim}
We formally define the discrete Fr\'{e}chet and Hausdorff distances. We also define two properties on $C$ and $\dist$ which allow us to design efficient clustering algorithms. Finally we give an overview of our algorithm.

\subparagraph*{Discrete Fr\'{e}chet and Hausdorff distance.} Consider two finite sets $\ptset_1$ and $\ptset_2$. A \textbf{\emph{correspondence}} $\C$ between $\ptset_1$ and $\ptset_2$ is a subset of $\ptset_1 \times \ptset_2$ such that every element of $\ptset_1$ and $\ptset_2$ appears in at least one pair in $\C$. For $\ptset_1, \ptset_2 \subseteq \reals^d$, the \textbf{\emph{Hausdorff distance}}~\cite{hausdorff78} is defined as
\begin{align*}
\dist_H(\ptset_1, \ptset_2) = \min_{ \C \in \Xi(\ptset_1,\ptset_2)}  \max_{(p,q) \in \C} \norm{p - q},
\end{align*}
where $\norm{.}$ denotes the $l_2$ norm, and $\Xi(\ptset_1,\ptset_2)$ is the set of all possible correspondences between $\ptset_1$ and $\ptset_2$.

A trajectory $\traj$ is a finite sequence of points $\langle p_1, p_2 \ldots\rangle$ in $\reals^d$. A correspodence can be defined for a pair of trajectories $\traj_1 = \langle p_1, p_2,\ldots \rangle$ and $\traj_2 = \langle q_1, q_2, \ldots \rangle$ by treating each trajectory as a point sequence; such a correspondence is said to be \textbf{\emph{monotonic}} if it also respects the ordering of points in the trajectories, i.e., if $(p_{i_1},q_{j_1}), (p_{i_2}, q_{j_2}) \in \C$, then $i_2 \ge  i_1 \Rightarrow j_2 \ge j_1$. The \textbf{\emph{discrete Fr\'{e}chet distance}}~\cite{eiter94} between $\traj_1$ and $\traj_2$ is defined as
\begin{align*}
\dist_F(\traj_1, \traj_2) = \min_{\C \in \Xi_M(\traj_1, \traj_2)} \max_{(p,q) \in \C} \norm{p-q},
\end{align*}
where $\Xi_M(\traj_1, \traj_2)$ is the set of all monotone correspondences between $\traj_1$ and $\traj_2$. Our algorithms cluster in the metric space defined by the discrete Fr\'{e}chet and Hausdorff distance.

\subparagraph*{Strong and weak sampling properties.}
We define two properties that make efficient clustering algorithms possible. These are generalizations of the \emph{strong} and \emph{weak} sampling properties defined by Ackermann \etal (see Theorem 1.1 and Property 4.1 in~\cite{ABS10}). The major difference is that the cluster centers are restricted to a subset $C$ of the metric space $X$. These properties allow fast approximation of the $(1,C)$-median using only a constant sized random sample of the input. We later show how to get an efficient $(k,C)$-median algorithm using the fast $(1,C)$-median algorithm as a subroutine. We denote the optimal $(1,C)$ median of any set $P \subseteq X$ by $c_P$.

\begin{definition}[Strong sampling property]
\label{def:strong_sampling}
Let $0 < \eps, \delta < 1$ be arbitrary. $(X,C,\dist)$ is said to satisfy the strong sampling property for $\eps, \delta$ iff

(i) For any finite $P \subseteq X$, $c_P$ can be computed in time depending only on $|P|$.

(ii) There exists a positive integer $m_{\delta, \eps}$ depending on $\delta, \eps$ such that for any $P \subseteq X$, the optimal $(1,C)$-median $c_S$ of a uniform random multiset $S \subseteq P$ of size $m_{\delta, \eps}$ satisfies
\begin{align*}
\Pr \left[ \sum_{p \in P} \dist(p, c_S) \le (1+\eps) \sum_{p \in P} \dist(p, c_P) \right] \ge 1 - \delta.
\end{align*}

\end{definition}

The strong sampling property characterizes those instances in which the optimal $(1,C)$-median of a constant-sized random sample is a good approximation to the optimal $(1,C)$-median of the whole set. However, in many cases it is impossible to efficiently solve the $(1,C)$-median exactly (e.g., when $C,X = \reals^d$ and $\dist$ is the Euclidean metric). This is also true for the discrete Fr\'{e}chet and Hausdorff distances for polygonal trajectories and finite point sets respectively. The following definition becomes helpful then.

\begin{definition}[Weak sampling property]
\label{def:weak_sampling}
Let $0 < \eps, \delta < 1$ be arbitrary. $(X,C,\dist)$ is said to satisfy the weak sampling property for $\eps, \delta$ iff there exist positive integers $m_{\delta, \eps}$ and $t_{\delta, \eps}$ depending on $\delta, \eps$ such that for any $P \subseteq X$ and a uniform random multiset $S \subseteq X$ of size $m_{\delta, \eps}$, there exists a set $\Gamma(S) \subseteq C$ of size $t_{\delta, \eps}$  that satisfies
\begin{align*}
\Pr \left[ \exists c \in \Gamma(S) \mid \sum_{p \in P} \dist(p, c) \le (1+\eps) \sum_{p \in P} \dist(p, c_P) \right] \ge 1 - \delta.
\end{align*}
Furthermore, $\Gamma(S)$ can be computed in time depending on $\delta, \eps, |S|$ but independent of $|P|$.
\end{definition}
The weak sampling property characterizes those instances in which it is possible to generate a constant number of candidate cluster centers in time independent of the size of the input set, and at least one of which is guaranteed to be a good approximation to the optimal $1$-center of the input set. We later show that the discrete Fr\'{e}chet and Hausdorff distances satisfy the weak sampling property. We use $m_{\delta,\eps}$ to denote the size of the random sample for both the strong and weak sampling properties.

\subparagraph{Algorithm overview.}
\label{sp:alg-overview}
We give an overview of our algorithm, denoted \textsc{Cluster}, in Fig.~\ref{fig:alg_cluster}. It is similar to the algorithm \textsc{Cluster} from~\cite{ABS10}, with the small but crucial difference being that the set of candidate cluster centers $\overline{C}_S$ comes from $C$; this also changes how the candidates are generated. We show that with a careful choice of $C$, our instance satisfies one of the sampling properties (Definitions~\ref{def:strong_sampling},~\ref{def:weak_sampling}), and we can apply the framework of Ackermann et al.~\cite{ABS10}.

The algorithm takes as input the set of points $\overline{P} \subseteq P$ that are yet to be assigned cluster centers, the number of cluster centers $\overline{k}$ still to be computed, and the centers $\overline{C} \subseteq C$ already computed. It returns the final set of cluster centers. To solve the $(k,C)$-median problem for $P$, we call \textsc{Cluster}$(P, k, \{\})$ with values of $\alpha,m_{\delta,\eps},F$ that we will specify later.

Briefly, the algorithm has two phases. In the pruning phase no new centers are added. Rather, the set $N$ containing half of the points of $\overline{P}$ closest to $\overline{C}$ are removed from $\overline{P}$, and the algorithm is called recursively on $\overline{P} \setminus N$. In the sampling phase, new centers are added. The algorithm first samples a uniformly random multiset $S$ of $\overline{P}$ of size $\frac{2}{\alpha} m_{\delta,\eps}$ for some constant $\alpha$ and $m_{\delta,\eps}$ to be defined later. Then for each subset $S' \subset S$ of size $m_{\delta,\eps}$, a set of candidate centers $F(S')$ is generated; the function $F$ varies depending on certain conditions satisfied by $(X,C,\dist)$; in particular $F(S') = \{c_{S'}\}$ for the strong sampling property, and $F(S') = \{\Gamma(S')\}$ for the weak sampling property. Each candidate center is in turn added to $\overline{C}$, and the algorithm is run recursively. Finally, the solution with the lowest cost is returned. \\

\begin{figure}[t]
\centering\small
\begin{algorithm}
\underline{$\textsc{Cluster}(\overline{P}, \overline{k}, \overline{C})$:}\+
\\	\textit{input:} Point set $\overline{P}$, remaining number of centers $\overline{k}$, computed centers $\overline{C}$
\\ if $\overline{k} = 0$: return $\overline{C}$
\\ else:\+
\\ if $\overline{k} \geq |\overline{P}|$: return $\overline{C} \cup \overline{P}$
\\ else:\+
\\ \textit{/* Pruning phase */}
\\ $N \gets$ set of $\frac{1}{2} |\overline{P}|$ minimal points $p \in \overline{P}$ w.r.t $\dist(p, \overline{C})$
\\ $C^* \gets \textsc{Cluster}(\overline{P} \setminus N, \overline{k}, \overline{C})$
\\ \textit{/* Sampling phase */}
\\ $S \gets$ uniform random multisubset of $\overline{P}$ of size $\frac{2}{\alpha} m_{\delta, \eps}$
\\$\overline{C}_S \gets \bigcup_{S' \subset S, |S'| = m_{\delta,\eps}} F(S')$
\\ for all $\overline{c} \in \overline{C}_S$ :\+
\\ $C^{\overline{c}} \gets \textsc{Cluster}(\overline{P}, \overline{k}-1, \overline{C} \cup \{ \overline{c} \})$\-
\\  return $C^{\overline{c}}$ or $C^*$ with the lowest cost
\end{algorithm}
\caption{Algorithm \textsc{Cluster} computes a $(k, C)$-median clustering.}
\label{fig:alg_cluster}
\end{figure}


The rest of the paper is organized as follows. In Section~\ref{sec:sampling}, we show that for $(X,C,\dist)$ satisfying the strong and weak sampling properties, the algorithm \textsc{Cluster}  (using the appropriate $\alpha,m_{\delta,\eps},F$) computes a $(1+\eps)$-approximation to the optimal $(k,C)$-median. In Section~\ref{sec:covering}, we prove sufficient conditions on $C$ for the sampling properties to hold. In Section~\ref{sec:frechet_hausdorff}, we give clustering algorithms for the discrete Fr\'{e}chet and Hausdorff distances using the framework developed in previous sections. In Section~\ref{sec:hardness}, we show that the $k$-median problem for Hausdorff distance is \textsc{NP-Hard}.

\section{Clustering via sampling}
\label{sec:sampling}
We show that the \textsc{Cluster} algorithm (Fig.~\ref{fig:alg_cluster}) computes an approximate $(k,C)$-median for instances satisfying the sampling properties and for appropriate $\alpha, m_{\delta,\eps}$ and $F$, i.e., if we use $F(S') = \{c_{S'}\}$ for the strong sampling property and $F(S') = \Gamma(S')$ for the weak sampling property. The analysis closely follows that of Ackermann \etal~\cite{ABS10}, so we leave detailed proofs to  Appendix~\ref{app:proofs_sec3}.

The \emph{superset sampling lemma} (Appendix~\ref{app:proof_superset_sampling}, Lemma~\ref{lem:superset_sampling}) shows how to draw a uniform random multiset from $P' \subseteq P$ while only knowing $P$ (without explicitly knowing $P'$), provided $P'$ contains a constant fraction of the points of $P$. Using this lemma and the strong and weak sampling properties, we have the following. We leave the proof of the \emph{superset sampling lemma} and the following lemma in Appendices~\ref{app:proof_superset_sampling} and~\ref{app:proof_cluster_algo} respectively.

\begin{lemma}
\label{lem:cluster_algo}
Let $\alpha < \frac{1}{4k}$ be an arbitrary positive constant. Suppose $(X,C,\dist)$ satisfies the strong or weak sampling property (Definitions~\ref{def:strong_sampling},~\ref{def:weak_sampling}) for some $\eps, \delta \in (0,1)$. Given $P \subseteq X$, algorithm \textsc{Cluster} run with input $(P,k,\{\})$ and appropriate $F$ computes a set $\tilde{C} \subseteq C$ of size $k$ such that
\begin{align*}
\Pr \left[ \sum_{p \in P} \dist(p, \tilde{C}) \le (1+8\alpha k^2)(1+\eps) \sum_{p \in P} \dist(p, C^*)\right] \ge \left(\frac{1-\delta}{5}\right)^k,
\end{align*}
where $C^*$ is an optimal solution to the $(k,C)$-median problem for $P$.
\end{lemma}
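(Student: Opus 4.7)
The plan is to prove Lemma~\ref{lem:cluster_algo} by induction on $k$, adapting the analysis of Ackermann \etal~\cite{ABS10} to the restricted-center setting. The base case $k=0$ is immediate from the first line of \textsc{Cluster}, and the case $\overline{k}\ge |\overline{P}|$ is trivially optimal with cost $0$. For the inductive step, fix an optimal solution $C^{*}=\{c_{1}^{*},\ldots,c_{k}^{*}\}$ inducing optimal clusters $P_{1}^{*},\ldots,P_{k}^{*}$ and consider a recursive call \textsc{Cluster}$(\overline{P},\overline{k},\overline{C})$. The key conceptual move is to split each optimal cluster into its \emph{covered} part (points already closer to some $c\in\overline{C}$ than to their optimal center in $C^{*}$) and its \emph{uncovered} part; covered points contribute a bounded multiple of their optimal cost regardless of which new centers we add.

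The argument then branches on the pruning set $N$ consisting of the $|\overline{P}|/2$ points of $\overline{P}$ closest to $\overline{C}$. If $N$ consists mostly of covered points, then removing $N$ loses negligible optimal cost, and the recursive call $\textsc{Cluster}(\overline{P}\setminus N,\overline{k},\overline{C})$ succeeds by the inductive hypothesis with probability $\bigl(\tfrac{1-\delta}{5}\bigr)^{\overline{k}}$. Otherwise, $N$ contains many uncovered points, which forces some uncovered optimal cluster $P_{j}^{*}$ to satisfy $|P_{j}^{*}\cap\overline{P}|\ge \alpha|\overline{P}|$ (using $\alpha<1/(4k)$ and the fact that there are at most $k$ surviving uncovered clusters). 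By the superset sampling lemma, the size-$(2/\alpha)m_{\delta,\eps}$ multiset $S$ drawn from $\overline{P}$ then contains, with constant probability, a size-$m_{\delta,\eps}$ submultiset $S'$ that is a uniform random multiset drawn from $P_{j}^{*}\cap\overline{P}$; conditional on this event, the strong (resp.\ weak) sampling property applied with parameters $\eps,\delta$ guarantees, with probability at least $1-\delta$, that $F(S')$ contains a candidate $\overline{c}\in C$ with $\sum_{p\in P_{j}^{*}}\dist(p,\overline{c})\le(1+\eps)\sum_{p\in P_{j}^{*}}\dist(p,c_{P_{j}^{*}})$. Adding this $\overline{c}$ to $\overline{C}$ and recursing on $\overline{k}-1$ then succeeds by the inductive hypothesis.

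Combining the two branches, each sampling event succeeds independently with probability at least $(1-\delta)/5$, so over the $k$ nested sampling recursions the overall success probability is $\bigl(\tfrac{1-\delta}{5}\bigr)^{k}$. For the cost bound, each successful sampling event contributes a multiplicative $(1+\eps)$ slack for the cluster it opens, while each pruning event charges the cost of pruned uncovered points to the as-yet-unopened optimal clusters, contributing a multiplicative $1+O(\alpha k)$ slack per recursion level; telescoping over the $O(k)$ levels yields the claimed $(1+8\alpha k^{2})(1+\eps)$ factor. The main obstacle is the bookkeeping needed to ensure that the charging from covered points composes correctly across both phases without double-counting: a point pruned in an outer call and later reassigned to a newly opened center in an inner call must be charged consistently. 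This is precisely where the hypothesis $\alpha<1/(4k)$ is used, and it mirrors the charging argument in~\cite{ABS10} with $C$ in place of the full metric space $X$; the details are deferred to Appendix~\ref{app:proof_cluster_algo}.
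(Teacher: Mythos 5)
Your outline gets the architecture right---superset sampling drives the sampling phase, a charging argument handles the pruning phase, and the probability $\bigl(\tfrac{1-\delta}{5}\bigr)^k$ comes from the $k$ nested sampling events---but the charging step, which is the crux of the lemma, is stated in a direction that cannot work. You claim that ``each pruning event charges the cost of pruned uncovered points to the as-yet-unopened optimal clusters.'' By your own definition, an uncovered point $p$ in an unopened cluster $P_j^*$ satisfies $\dist(p,\overline{C}) > \dist(p,c_j^*)$, so $P_j^*$'s optimal cost gives you nothing against which to bound $\dist(p,\overline{C})$. The paper charges in the \emph{opposite} direction, to the already-opened cluster: when $P_2$-points are pruned in round $i$ and hence assigned to $\overline{c_1}$, their cost is bounded by that of the $P_1$-points pruned in round $i+1$. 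The mechanism is that pruning always removes the $\tfrac{1}{2}|\overline{P}|$ points closest to $\overline{C}$, so every point in $N^{(i)}$ is no farther from $\overline{C}$ than every point in $N^{(i+1)}$; combined with the counting bounds $|P_2\cap N^{(i)}|\le 2\alpha n/2^i$ and $|P_1\cap N^{(i+1)}|\ge (1-2\alpha)n/2^{i+1}$ (which is where the case hypothesis and $\alpha<\tfrac{1}{4k}$ enter), this yields $\sum_{p\in P_2\cap N}\dist(p,\overline{c_1})\le 8\alpha(1+\eps)\sum_{p\in P_1}\dist(p,c_1^*)$. This monotonicity of pruned distances across rounds is the key idea and is not implied by, or recoverable from, the covered/uncovered decomposition alone.

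Two smaller issues. First, the observation that covered points cost at most a bounded multiple of their optimal cost is vacuous (the multiple is $1$) and does no work; the entire difficulty lies with the uncovered pruned points. Second, the paper does not actually argue by induction on $k$: it gives a direct analysis for $k=2$ and lifts it to general $k$ by merging the already-opened clusters into a single supercluster $P'_1$, accruing a factor $8\alpha k$ per opened center and hence $8\alpha k^2$ overall. If you insist on a formal induction, note that the pruning branch recurses at the same $\overline{k}$, so the induction must also descend on $|\overline{P}|$, and the inductive hypothesis must be strengthened to track the pruning error already committed; as written your base case and inductive step do not mesh.
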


\subparagraph*{Running time.}


We characterize the running time of \textsc{Cluster} in terms of $\dist$, $C$ and $F$. Let $\Q(C)$ denote the maximum time required to compute $\argmin_{y \in C} \dist(x,y)$ for any $x \in X$, and let $\D(C)$ denote the maximum time required to compute $\dist(x,y)$ for any $x \in X, y \in C$. Finally, let $w(m) = \max_{S \subseteq X, |S| = m} |F(S)|$, and let $f(m)$ be the maximum number of operations needed to compute $F(S)$ for any $S \subseteq X$ of size $m$, where computing $\dist$ between points in $X$ and $C$, and computing the closest point in $C$ to any point in $X$ count as one operation each. The proof is similar to the running time analysis from~\cite{ABS10}, and is given in Appendix~\ref{app:proof_cluster_runtime}. 

\begin{lemma}
\label{lem:cluster_runtime}
Suppose $(X,C,\dist)$ satisfies the strong or weak sampling properties (Definitions~\ref{def:strong_sampling} and~\ref{def:weak_sampling}) for some $\epsilon, \delta \in (0,1)$.  Given $P \subseteq X$ containing $n$ points, algorithm \textsc{Cluster} runs in time
\begin{align*}n \cdot 2^{O\left(k m_{\delta,\eps} \log\left(\frac{1}{\alpha}m_{\delta,\eps}\right)\right)} \cdot \left(w(m_{\delta,\eps}) \cdot f(m_{\delta,\eps})\right)^{O(k)} \cdot (\Q(C) + \D(C)).
\end{align*}
\end{lemma}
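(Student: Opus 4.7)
The plan is to bound the per-node work of \textsc{Cluster} and then solve the resulting recurrence in the two parameters $(|\overline{P}|,\overline{k})$ by induction on $\overline{k}$, using a linear-in-$\bar n$ invariant that prevents any spurious $\log$-factor from compounding over the $k$ sampling levels.

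First I would account the work performed at a single invocation on $(\overline{P},\overline{k},\overline{C})$, writing $\bar n = |\overline{P}|$ and $m = m_{\delta,\eps}$. Forming $N$ requires computing $\dist(p,\overline{C})$ for each $p \in \overline{P}$, which costs $O(\bar n\, k\,\D(C))$ since $|\overline{C}|\le k$, plus a linear-time selection. Drawing $S$ takes $O(m/\alpha)$. Enumerating the $\binom{2m/\alpha}{m}$ subsets of size $m$ and evaluating $F$ on each performs $\binom{2m/\alpha}{m}\cdot f(m)$ operations, at cost $O(\Q(C)+\D(C))$ apiece. Finally, for each of the $|\overline{C}_S|+1 \le B+1$ candidate clusterings returned by the recursive calls (where $B := \binom{2m/\alpha}{m}\cdot w(m)$ bounds $|\overline{C}_S|$), computing its objective on $\overline{P}$ costs $O(\bar n\, k\,\D(C))$. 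Using $\binom{2m/\alpha}{m} \le (2e/\alpha)^m \le 2^{O(m\log(m/\alpha))}$, we obtain $B \le 2^{O(m\log(m/\alpha))}\cdot w(m)$, and the per-node cost can be written as $W_0(\bar n) = A\bar n + D$ with $A = O(kB\,\D(C))$ and $D = O\bigl(\tfrac{B}{w(m)} f(m)(\Q(C)+\D(C))\bigr)$.

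The two recursive branches give the recurrence
\begin{align*}
T(\bar n,\bar k) \;\le\; T(\bar n/2,\bar k) + B\cdot T(\bar n,\bar k-1) + W_0(\bar n),
\end{align*}
which after unrolling the pruning branch and telescoping $\sum_{i\ge 0} \bar n/2^i = O(\bar n)$ becomes
\begin{align*}
T(\bar n,\bar k) \;\le\; O(A\bar n) + O(D\log \bar n) + B \sum_{i=0}^{\lceil \log \bar n \rceil} T(\bar n/2^i,\bar k-1).
\end{align*}
I would then prove by induction on $\bar k$ the single-term invariant $T(\bar n,\bar k) \le \bar n\cdot C_{\bar k}$ with $C_{\bar k} = O\bigl(B^{\bar k}(A+D)\bigr)$. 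The base case $\bar k=0$ is trivial, and the inductive step follows by substituting the hypothesis into $\sum_i T(\bar n/2^i, \bar k -1) = O(\bar n\, C_{\bar k - 1})$ (another geometric sum) and absorbing the lone $O(D\log \bar n)$ additive term into $O(D\bar n)$ via $\log \bar n \le \bar n$.

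Setting $(\bar n,\bar k)=(n,k)$ and plugging in $A$, $D$, and $B$ yields
\begin{align*}
T(n,k) \;=\; O\!\left(n\cdot B^{k+1}\bigl(k\,\D(C) + \tfrac{f(m)}{w(m)}(\Q(C)+\D(C))\bigr)\right) \;=\; n\cdot 2^{O(km\log(m/\alpha))}\cdot\bigl(w(m)f(m)\bigr)^{O(k)}\cdot(\Q(C)+\D(C)),
\end{align*}
matching the claimed bound. The main obstacle is the choice of invariant: a two-term invariant $\bar n\,C_{\bar k} + (\log \bar n)\,D_{\bar k}$ would accumulate a $\log^2 \bar n$ factor through the pruning-chain sum $\sum_i \log(\bar n/2^i) = \Theta(\log^2 \bar n)$, and this would compound to $(\log n)^{O(k)}$ across the $k$ sampling levels; folding the $\log$-dependent contribution into the $\bar n$-dependent one at every level (justified by $\log \bar n \le \bar n$) is precisely what keeps the final bound linear in $n$ rather than $n\,(\log n)^{O(k)}$.
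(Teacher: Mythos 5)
Your proof is correct and follows essentially the same approach as the paper's: account the per-node work, write down the two-variable recurrence in $(\bar n,\bar k)$, unroll the pruning branch as a geometric series, and close the induction on $\bar k$ with a linear-in-$\bar n$ invariant. The paper simply states the recurrence and asserts the solution; you explicitly supply the step it elides, correctly noting that bounding $\log\bar n\le\bar n$ (equivalently, absorbing the constant per-node candidate-generation cost into the recursive term) is what keeps the invariant closed and prevents $(\log n)^{O(k)}$ leakage.
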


By setting $\alpha = \frac{\eps}{8k^2}$, the approximation factor in Lemma~\ref{lem:cluster_algo} becomes $(1+3 \eps)$. Moreover, the error probability can be made arbitrarily small by running the \textsc{Cluster} algorithm $2^{\Theta(k)}$ times and taking the minimum cost solution, without changing the asymptotic running time. We thus get the following. Note that $w(m_{\delta,\eps})$ takes on values $1$ and $t_{\delta,\eps}$ for the strong and weak sampling properties respectively.

\begin{theorem}
\label{thm:k_median_strong_sampling}
Suppose $(X,C,\dist)$ satisfies the strong sampling property (Definition~\ref{def:strong_sampling}) for some $\eps, \delta \in (0,1)$. Further, suppose $c_S$ can be computed in $\A(m_{\delta,\eps})$ operations, where computing $\dist$ between points in $X$ and $C$, and computing the closest point in $C$ to any point in $X$ count as one operation each. Given $P \subseteq X$ having $n$ points and $k \in \mathbb{N}$, with probability $\ge 1-\delta$, a $(1+3\eps)$-approximate solution to the $(k,C)$-median problem for $P$ can be computed in time
\begin{align*}
n \cdot 2^{O\left(k m_{\delta,\eps} \log\left(\frac{k}{\eps}m_{\delta,\eps}\right)\right)} \cdot \A(m_{\delta,\eps})^{O(k)} \cdot (\Q(C) + \D(C)).
\end{align*}
\end{theorem}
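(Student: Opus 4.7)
My plan is to derive this theorem as a direct corollary of Lemma~\ref{lem:cluster_algo} and Lemma~\ref{lem:cluster_runtime}, specialized to the strong sampling case where $F(S') = \{c_{S'}\}$ and so $w(m_{\delta,\eps}) = 1$ and $f(m_{\delta,\eps}) = \A(m_{\delta,\eps})$. The only two substantive steps are a careful choice of the constant $\alpha$ and an independent-repetition argument that amplifies the success probability up to $1-\delta$.

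First I would set $\alpha = \eps/(8k^2)$, which satisfies the hypothesis $\alpha < 1/(4k)$ of Lemma~\ref{lem:cluster_algo} whenever $\eps < 1/2$ (the range $\eps \in [1/2,1)$ can be absorbed into constants or handled by rescaling $\eps$). With this choice, the approximation factor $(1+8\alpha k^2)(1+\eps) = (1+\eps)^2 \le 1+3\eps$ for $\eps<1$, so a single execution of \textsc{Cluster}$(P,k,\{\})$ returns a $(1+3\eps)$-approximate $(k,C)$-median with probability at least $p := ((1-\delta)/5)^k$. I would then run \textsc{Cluster} independently $N$ times and return the lowest-cost solution, with $N$ chosen so that $(1-p)^N \le \delta$. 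A standard calculation gives $N = O\bigl(p^{-1}\log(1/\delta)\bigr) = O\bigl(5^k(1-\delta)^{-k}\log(1/\delta)\bigr) = 2^{O(k)}$, where the last equality absorbs the $\delta$-dependence into the hidden constant (and if $\delta$ is allowed arbitrarily close to $1$, one may instead invoke the sampling property at a fixed $\delta'=1/2$ and amplify from there, which alters $m_{\delta,\eps}$ only by a constant factor).

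Finally I would plug the strong-sampling values $w=1$ and $f(m_{\delta,\eps}) = \A(m_{\delta,\eps})$ together with $\alpha = \eps/(8k^2)$ into Lemma~\ref{lem:cluster_runtime}. The logarithm in the exponent becomes $\log(m_{\delta,\eps}/\alpha) = \log(8k^2 m_{\delta,\eps}/\eps) = \Theta(\log(km_{\delta,\eps}/\eps))$, giving a single-run time of
\[
n \cdot 2^{O\left(k m_{\delta,\eps} \log(k m_{\delta,\eps}/\eps)\right)} \cdot \A(m_{\delta,\eps})^{O(k)} \cdot (\Q(C)+\D(C)).
\]
Multiplying by the $N = 2^{O(k)}$ repetitions is absorbed into the $2^{O(km_{\delta,\eps}\log(\cdot))}$ factor, yielding exactly the bound stated in the theorem. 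There is no real obstacle here beyond bookkeeping; all the conceptual content lives in Lemmas~\ref{lem:cluster_algo} and~\ref{lem:cluster_runtime}, and the only thing to be slightly careful about is verifying that the probability amplification does not introduce a $\delta$-dependence outside the claimed $2^{O(k)}$ overhead.
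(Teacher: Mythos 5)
Your proposal matches the paper's own argument almost line for line: set $\alpha = \eps/(8k^2)$, observe $(1+8\alpha k^2)(1+\eps) \le 1+3\eps$, amplify the $((1-\delta)/5)^k$ success probability to $1-\delta$ by $2^{O(k)}$ independent repetitions, and specialize Lemma~\ref{lem:cluster_runtime} with $w(m_{\delta,\eps})=1$ and $f(m_{\delta,\eps})=\A(m_{\delta,\eps})$. One minor point: your hedge that $\alpha < 1/(4k)$ "whenever $\eps < 1/2$" is unnecessary, since $\eps/(8k^2) < 1/(4k)$ already holds for all $\eps \in (0,1)$ and $k \ge 1$.
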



\begin{theorem}
\label{thm:k_median_weak_sampling}
Suppose $(X,C,\dist)$ satisfies the weak sampling property (Definition~\ref{def:weak_sampling}) for some $\eps, \delta \in (0,1)$. Further, suppose $\Gamma(S)$ can be computed in  $\G(m_{\delta,\eps})$ operations, where computing $\dist$ between points in $X$ and $C$, and computing the closest point in $C$ to any point in $X$ count as one operation each. Given $P \subseteq X$ having $n$ points and $k \in \mathbb{N}$, with probability $\ge 1-\delta$, a $(1+3\eps)$-approximate solution to the $(k,C)$-median problem for $P$ can be computed in time
\begin{align*}
n \cdot 2^{O\left(k m_{\delta,\eps} \log\left(\frac{k}{\eps}m_{\delta,\eps}\right)\right)} \cdot \left( t_{\delta,\eps} \cdot \G(m_{\delta,\eps})\right)^{O(k)} \cdot (\Q(C) + \D(C)).
\end{align*}
\end{theorem}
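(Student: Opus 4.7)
The plan is to derive Theorem~\ref{thm:k_median_weak_sampling} as a direct consequence of Lemmas~\ref{lem:cluster_algo} and~\ref{lem:cluster_runtime}, specializing to the case where the algorithm \textsc{Cluster} is instantiated with $F(S') = \Gamma(S')$ (as indicated in the paragraph introducing \textsc{Cluster} for the weak sampling setting). Concretely, I would run \textsc{Cluster} with $\alpha = \eps/(8k^2)$. Lemma~\ref{lem:cluster_algo} then guarantees that, with probability at least $((1-\delta)/5)^k$, the returned center set $\tilde{C}$ satisfies
\[
\sum_{p \in P} \dist(p, \tilde{C}) \;\le\; (1 + 8\alpha k^2)(1+\eps)\sum_{p \in P}\dist(p, C^*) \;=\; (1+\eps)^2 \sum_{p \in P}\dist(p, C^*),
\]
and since $\eps \in (0,1)$, $(1+\eps)^2 \le 1 + 3\eps$, giving the claimed approximation factor.

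To boost the success probability to the stated $1 - \delta$, I would run \textsc{Cluster} independently $2^{\Theta(k)}$ times and keep the lowest-cost solution; a standard calculation shows that $O\bigl(5^k \log(1/\delta)\bigr)$ repetitions suffice to amplify $((1-\delta)/5)^k$ to $1-\delta$. Because the per-call running time from Lemma~\ref{lem:cluster_runtime} already contains factors of the form $2^{O(k m_{\delta,\eps} \log(\cdots))}$ and $(\cdot)^{O(k)}$, the extra $2^{\Theta(k)}$ multiplicative overhead is absorbed into the asymptotic expression without changing it.

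For the running time itself, I would plug into Lemma~\ref{lem:cluster_runtime} the weak-sampling specializations $w(m_{\delta,\eps}) = t_{\delta,\eps}$ (since $|\Gamma(S')| = t_{\delta,\eps}$ by Definition~\ref{def:weak_sampling}) and $f(m_{\delta,\eps}) = \G(m_{\delta,\eps})$ (by hypothesis on $\Gamma$). Substituting $\alpha = \eps/(8k^2)$ turns $\log\bigl(\tfrac{1}{\alpha} m_{\delta,\eps}\bigr)$ into $\log\bigl(\tfrac{8k^2}{\eps} m_{\delta,\eps}\bigr) = O\bigl(\log(\tfrac{k}{\eps} m_{\delta,\eps})\bigr)$, which is exactly the logarithmic factor inside the exponent in the theorem statement. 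Combining with the $(\Q(C) + \D(C))$ factor from the per-operation cost of distance and nearest-center queries yields the stated bound.

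The only non-routine step I anticipate is making sure the $2^{\Theta(k)}$ amplification and the choice $\alpha = \eps/(8k^2)$ are consistent with the overall bookkeeping — i.e.,\ that $\log(1/\alpha) = O(\log(k/\eps))$ is genuinely what appears in Lemma~\ref{lem:cluster_runtime}'s exponent and that no hidden dependence on $m_{\delta,\eps}$ is being lost when one absorbs constants. Beyond this, the proof is essentially a rewriting exercise: Lemma~\ref{lem:cluster_algo} supplies correctness, Lemma~\ref{lem:cluster_runtime} supplies the time bound, and the weak-sampling-specific identifications $w = t_{\delta,\eps}$, $f = \G$ complete the statement. An analogous argument, with $w = 1$ and $f = \A$, yields Theorem~\ref{thm:k_median_strong_sampling}.
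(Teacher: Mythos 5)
Your proposal is correct and matches the paper's own derivation exactly: the paper likewise sets $\alpha = \eps/(8k^2)$ to turn Lemma~\ref{lem:cluster_algo}'s factor $(1+8\alpha k^2)(1+\eps)$ into $(1+\eps)^2 \le 1+3\eps$, amplifies the $((1-\delta)/5)^k$ success probability by $2^{\Theta(k)}$ repetitions, and substitutes $w(m_{\delta,\eps}) = t_{\delta,\eps}$, $f(m_{\delta,\eps}) = \G(m_{\delta,\eps})$ into Lemma~\ref{lem:cluster_runtime}. The only cosmetic difference is that the exact repetition count is $O\bigl((5/(1-\delta))^k\log(1/\delta)\bigr)$ rather than $O(5^k\log(1/\delta))$, but since $\delta$ is a fixed constant this is still $2^{\Theta(k)}$ and does not affect the stated bound.
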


\section{Covering metric spaces}
\label{sec:covering}
We specify sufficient conditions on $C$ for the strong and weak sampling properties to hold. These conditions characterize how well can certain subsets of $C$ be \emph{covered} using a small number of sets.

Let $\X = (X,\dist)$ be a metric space. Given $x \in X$, let $\ball_\dist(x,r) = \{x' \in X \mid \dist(x,x') \le r\}$ denote the ball of radius $r$ (under $\dist$) centered at $x$; we will  drop the subscript $\dist$ if it is clear from the context. An \emph{\textbf{r-cover}} of a subset $X' \subseteq X$ for some $r >  0$ is a set $Y \subseteq X$ such that
$
X' \subseteq \bigcup_{y \in Y} \ball(y, r).
$
Note that the elements of $Y$ need not be in $X'$. Also note that if $Y$ is an $r$-cover for $X'$, it is also an $r$-cover for any subset of $X'$.

A subset $Y \subseteq X$ is said to be \emph{\textbf{g-coverable}} for some non-decreasing function $g : \reals_{\ge 0} \rightarrow \reals_{\ge 0}$ iff for all $y \in Y$ and $r > r' > 0$, there exists an $r'$-cover of $\ball(y, r) \cap Y$ of size at most $g(r/r')$. Note that if $Y$ is $g$-coverable then any subset $Y' \subseteq Y$ is also $g$-coverable.

Intuitively, if $C$ has a small cover, then for any metric ball in $C$ there exists a small set of points (not necessarily from $C$), termed the cover, such that the distance from any point in the ball to a point in the cover is smaller than the radius of the ball.

\subparagraph*{Sufficient conditions for the strong sampling property.} The following theorem gives sufficient conditions for the strong sampling property to hold in terms of coverability of $C$. The proof is similar to Lemma~3.4 of \cite{ABS10} but has been adapted to our setting, see Appendix~\ref{app:proof_cover_strong} for more details.

\begin{theorem}
\label{thm:cover_strong}
If $C$ is $g$-coverable, and for any $P \subseteq X$, $c_P$ can be computed in time depending only on $|P|$, then $(X,C,\dist)$ satisfies the strong sampling property (Definition~\ref{def:strong_sampling}) for any $\eps, \delta \in (0,1)$. Here, the constant $m_{\delta, \eps} = m_{\delta, \eps, g}$ also depends on $g$.
\end{theorem}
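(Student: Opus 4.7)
The plan is to mimic the proof of Lemma~3.4 in~\cite{ABS10}, replacing the doubling-dimension argument with the weaker $g$-coverability hypothesis. Fix $P \subseteq X$, let $\Delta = \sum_{p \in P}\dist(p,c_P)$, draw a uniform multiset $S \subseteq P$ of size $m$ (to be chosen), and write $f_Q(c) = \sum_{x \in Q}\dist(x,c)$ for any finite $Q \subseteq X$. Since $\mathbb{E}[f_S(c_P)] = (m/|P|)\Delta$, Markov's inequality gives $f_S(c_P) \le (4/\delta)(m/|P|)\Delta$ with probability at least $1 - \delta/4$. Because $c_S$ minimizes $f_S$ over $C$ and $c_P \in C$, we also have $f_S(c_S) \le f_S(c_P)$; summing the triangle inequality $\dist(c_S,c_P) \le \dist(c_S,s) + \dist(s,c_P)$ over $s \in S$ then yields $\dist(c_S,c_P) \le R$ on this event, with $R = (8/\delta)\Delta/|P|$.

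On this good event $c_S$ lies in $B := \ball(c_P,R)$, so it suffices to bound the cost uniformly over $B \cap C$. Invoking $g$-coverability of $C$, pick $r = \eps R / c_1$ for an absolute constant $c_1$ and let $Y$ be an $r$-cover of $B \cap C$ with $|Y| \le g(R/r) = g(c_1/\eps)$. Crucially, both $R$ and $r$ scale linearly in $\Delta/|P|$, so $|Y|$ depends only on $\eps$ and $g$, not on $P$ or $\Delta$.

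For each $y \in Y$, consider the centered variables $Z_{y,i} = \dist(s_i,y) - \dist(s_i,c_P)$; the triangle inequality bounds $|Z_{y,i}| \le \dist(y,c_P) \le R$, while $\mathbb{E}[Z_{y,i}] = (f_P(y) - \Delta)/|P|$. Applying Hoeffding's inequality to each $y$ and union-bounding over $Y$ will give, with probability at least $1-\delta/4$,
\begin{align*}
\left| \tfrac{|P|}{m}\, f_S(y) - f_P(y) \right| \,\le\, \eps\Delta / c_2 \qquad \text{for every } y \in Y,
\end{align*}
provided $m = m_{\delta,\eps,g}$ is taken at least $\Omega\bigl( (\log g(c_1/\eps) + \log(1/\delta))/(\eps\delta)^2 \bigr)$. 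A Lipschitz-type transfer (adding or subtracting $|P|r$ and $mr$ via the triangle inequality) then pushes the estimate from $Y$ to every $c \in B \cap C$, and in particular to $c_S$.

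Chaining these bounds, with $y_{c_S} \in Y$ satisfying $\dist(c_S,y_{c_S}) \le r$, we get $f_P(c_S) \le f_P(y_{c_S}) + |P|r \le (|P|/m)f_S(y_{c_S}) + O(\eps\Delta) \le (|P|/m) f_S(c_S) + O(\eps\Delta) \le (|P|/m)f_S(c_P) + O(\eps\Delta) \le (1+O(\eps))\Delta$, so rescaling $\eps$ by a constant yields the desired $(1+\eps)$-approximation with probability at least $1-\delta$. The main obstacle is balancing $R$, $r$, $c_1$, $c_2$ and the sample size $m_{\delta,\eps,g}$ so that simultaneously (i) $c_S \in B$ on a high-probability event, (ii) $|Y| = g(c_1/\eps)$ is independent of $|P|$ and $\Delta$, and (iii) the Hoeffding deviation absorbs the additive Lipschitz slack $|P|r$ and $mr$, all while $m_{\delta,\eps,g}$ remains a function of $\eps$, $\delta$ and $g$ only, as demanded by Definition~\ref{def:strong_sampling}.
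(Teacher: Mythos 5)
Your proposal is structurally sound but takes a genuinely different route from the paper, and there are two slips in the bookkeeping that need repair.

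The paper follows Ackermann et al.\ closely: it applies their Lemma~3.2 (our Lemma~\ref{lem:sampling_1}, a bound on the \emph{max} deviation of a sample point from $c_P$), which forces the localization radius $r$ to scale like $m\,\Delta/(\delta n)$, and their Lemma~3.3 (Lemma~\ref{lem:sampling_2}) for the concentration step; the resulting cover has size $g(30m/(\delta\eps))$, creating a circular dependence of the cover size on $m$ that the paper resolves only via a footnoted assumption on $g$. Your approach replaces Lemma~\ref{lem:sampling_1} with Markov applied to the \emph{sum} $f_S(c_P)$, which together with $f_S(c_S)\le f_S(c_P)$ and the triangle inequality yields $\dist(c_S,c_P)\le R$ with $R=\Theta(\Delta/(\delta|P|))$ -- crucially \emph{independent of $m$} -- and then uses Hoeffding with the explicit bound $|Z_{y,i}|\le R$ in place of Lemma~\ref{lem:sampling_2}. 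This cleanly sidesteps the circularity and lets $m_{\delta,\eps,g}$ be read off directly. That is a real improvement in exposition.

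Two places need fixing. First, you write $|Y|\le g(R/r)=g(c_1/\eps)$ for an ``absolute constant'' $c_1$. But $R$ carries a $1/\delta$ factor, and the Lipschitz slack you pay in the chaining is $|P|r$; to have $|P|r = O(\eps\Delta)$ you must take $r=O(\eps\Delta/|P|)$, which gives $R/r=\Theta(1/(\eps\delta))$, not $\Theta(1/\eps)$. The cover size is $g(\Theta(1/(\eps\delta)))$ -- still $m$-independent, so the argument survives, but your stated dependence is wrong, and if $c_1$ really were a universal constant the slack term $|P|r$ would be $\Theta(\eps\Delta/\delta)$ and would not be absorbed. Second, your final display chains $f_P(y_{c_S})\le(|P|/m)f_S(y_{c_S})+O(\eps\Delta)$ and later $(|P|/m)f_S(c_P)\le(1+O(\eps))\Delta$; neither follows from what you proved, since your Hoeffding bound controls only the \emph{centered} quantity $f_S(y)-f_S(c_P)$ (the summands $\dist(s_i,c_P)$ themselves are unbounded and admit only the Markov bound $f_S(c_P)\le (4/\delta)(m/|P|)\Delta$, which is far from $(1+O(\eps))$). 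The correct chain is $f_P(c_S)-\Delta \le (f_P(y_{c_S})-\Delta)+|P|r \le (|P|/m)\bigl(f_S(y_{c_S})-f_S(c_P)\bigr)+\eps\Delta/c_2+|P|r \le (|P|/m)\bigl(f_S(c_S)+mr-f_S(c_P)\bigr)+\eps\Delta/c_2+|P|r \le 2|P|r+\eps\Delta/c_2$, using the centered concentration, the $r$-Lipschitz transfer, and $f_S(c_S)\le f_S(c_P)$. With $|P|r\le\eps\Delta$ and $c_2\ge 1$ this gives $f_P(c_S)\le(1+3\eps)\Delta$; rescaling $\eps$ finishes the proof. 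With those two repairs your argument is correct and, in my view, cleaner than the paper's.
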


From Theorems~\ref{thm:k_median_strong_sampling} and~\ref{thm:cover_strong}, we get the following.

\begin{corollary}
\label{cor:k_median_cover_strong}
Suppose $C$ is $g$-coverable and the optimal $(1,C)$-median of any subset of $X$ can be computed in time depending on the size of the subset. Let $\eps, \delta \in (0,1)$. Given $P \subseteq X$ having $n$ points and $k \in \mathbb{N}$, with probability $\ge 1-\delta$, a $(1+3\eps)$-approximate solution to the $(k,C)$-median problem for $P$ can be computed in time
\begin{align*}
n \cdot 2^{O\left(k m_{\delta,\eps,g} \log\left(\frac{k}{\eps}m_{\delta,\eps}\right)\right)} \cdot \A(m_{\delta,\eps,g})^{O(k)} \cdot (\Q(C) + \D(C)),
\end{align*}
where $m_{\delta,\eps,g}$ is a constant depending only on $\eps,\delta,g$, and $\A(m)$ is the number of operations needed to compute the optimal $(1,C)$-median of $m$ points in $X$, where computing $\dist$ between points in $X$ and $C$, and computing the closest point in $C$ to any point in $X$ count as one operation each.
\end{corollary}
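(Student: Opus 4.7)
The plan is to obtain this corollary by chaining the two preceding theorems without introducing new machinery. The hypotheses align exactly: $g$-coverability of $C$ together with the assumption that $c_P$ is computable in time depending only on $|P|$ are precisely what Theorem~\ref{thm:cover_strong} requires, so a single invocation of that theorem yields that $(X,C,\dist)$ satisfies the strong sampling property (Definition~\ref{def:strong_sampling}) for the given $\eps,\delta$. The sample-size constant produced by this step is $m_{\delta,\eps,g}$, depending on $\eps,\delta$ and on the covering function $g$.

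Next I would feed this conclusion into Theorem~\ref{thm:k_median_strong_sampling}. That theorem takes as input any instance satisfying the strong sampling property together with a bound $\A(m_{\delta,\eps})$ on the number of operations needed to compute $c_S$ for a sample $S$ of size $m_{\delta,\eps}$, counting any invocation of $\dist$ between $X$ and $C$, or a nearest-point query in $C$, as a single operation. Both pieces are already in hand: the strong sampling property comes from the previous step, and the standing assumption that the optimal $(1,C)$-median of any subset of $X$ is computable in time depending only on the size of the subset supplies the bound $\A(m_{\delta,\eps,g})$. Substituting $m_{\delta,\eps} = m_{\delta,\eps,g}$ into the running-time expression of Theorem~\ref{thm:k_median_strong_sampling} yields exactly the stated bound, and the success probability $\ge 1-\delta$ is inherited verbatim, as is the $(1+3\eps)$-approximation factor.

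The only real care needed is bookkeeping around the sample-size constant: $m_{\delta,\eps,g}$ must be threaded consistently so that the dependence on $g$ appears only inside the sample size and, through it, inside $\A(m_{\delta,\eps,g})$, while the remaining factors $\Q(C)+\D(C)$ and the exponent $O(km_{\delta,\eps,g}\log(\tfrac{k}{\eps}m_{\delta,\eps,g}))$ retain their form. There is no additional analysis to perform; the corollary is a direct consequence of the two theorems it cites, so I do not anticipate any genuine obstacle beyond this notational alignment.
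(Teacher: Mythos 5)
Your proposal is correct and matches the paper exactly: the paper derives the corollary by applying Theorem~\ref{thm:cover_strong} to obtain the strong sampling property (with constant $m_{\delta,\eps,g}$) and then plugging this into Theorem~\ref{thm:k_median_strong_sampling}, which is precisely the chaining you describe.
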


\subparagraph*{Sufficient conditions for the weak sampling property.}  We give sufficient conditions for the weak sampling property to hold in terms of coverability of $C$.

For $Y \subseteq X$, let $\R_Y\left(\frac{r}{r'}\right)$ be the number of operations required to compute an $r'$-cover of $\ball(y,r) \cap Y$ for any $y \in Y$ (if such a cover exists) where computing $\dist$ between points in $X$ and $C$, and computing the closest point in $C$ to any point in $X$ count as one operation each (we assume that the number of operations can be expressed in terms of $\frac{r}{r'}$).

The following lemma will be helpful, and states that if $C$ has a small cover and if we have a good estimate of the \emph{cost} of the optimal $(1,C)$-median, then we can construct a small set of points in $C$ such that at least one of them is a good approximation to the optimal $(1,C)$-median. Further, this can be done in time independent of $|P|$. Both of these properties are necessary for the weak sampling property~(Definition~\ref{def:weak_sampling}).

\begin{lemma}
\label{lem:cover_weak}
Let $P \subseteq X$. Suppose $C$ is $g$-coverable. Then given $a, b$ such that $a \le \frac{1}{|P|} \sum_{p \in P} \dist(c_P, p) \le b$, we can compute a set $Q \subseteq C$ of size $O(g\left(\frac{4b}{\eps \delta a}\right))$ such that
\begin{align*}Pr \left[ \exists q \in Q \mid \sum_{p \in P} \dist(p,q) \le (1+\eps) \sum_{p \in P} \dist(p, c_P)\right] \ge 1 - \delta.
\end{align*}
Further, $Q$ can be computed in $O\left(\R_C\left(\frac{4b}{\eps\delta a}\right) + g\left(\frac{4b}{\eps \delta a}\right)\right)$ operations, where computing $\dist$ between points in $X$ and $C$, and computing the closest point in $C$ to any point in $X$ count as one operation each.
\end{lemma}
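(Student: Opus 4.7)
My plan is to use a single uniformly random point from $P$ to localize $c_P$ to a known ball (via Markov's inequality), and then invoke the $g$-coverability of $C$ to enumerate $O(g(4b/(\eps\delta a)))$ candidate centers in $C$, at least one of which is within $\eps a$ of $c_P$ with probability at least $1-\delta$. Since $\eps a \le \eps \cdot \frac{1}{|P|}\sum_{p \in P}\dist(p, c_P)$, the triangle inequality then gives the desired $(1+\eps)$-approximation bound on the clustering cost.

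Concretely, I would first sample $s \in P$ uniformly at random. By assumption $\E[\dist(s, c_P)] \le b$, so Markov's inequality yields $\Pr[\dist(s, c_P) > b/\delta] \le \delta$. I would then set $c' = \argmin_{c \in C} \dist(s, c)$ (one operation in the stated accounting); since $c_P \in C$, the triangle inequality gives $\dist(c', c_P) \le 2\dist(s, c_P)$, so with probability $\ge 1-\delta$, $c_P \in \ball(c', 2b/\delta) \cap C$. Invoking $g$-coverability with $r = 2b/\delta$ and $r' = \eps a/2$, I would compute an $r'$-cover $Z$ of $\ball(c', r) \cap C$ of size at most $g(4b/(\eps\delta a))$ in $\R_C(4b/(\eps\delta a))$ operations. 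Since the elements of $Z$ need not themselves lie in $C$, I would then project each $z \in Z$ to its nearest neighbor $z^* = \argmin_{c \in C} \dist(z, c)$ and take $Q = \{z^* : z \in Z\}$; each projection is one operation, giving total computation $O(\R_C(4b/(\eps\delta a)) + g(4b/(\eps\delta a)))$, independent of $|P|$, and $|Q|$ of the claimed size.

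For correctness, condition on the good event $\dist(s, c_P) \le b/\delta$. Then $c_P$ lies in the ball being covered, so some $z \in Z$ satisfies $\dist(z, c_P) \le \eps a/2$; and since $c_P \in C$ while $z^*$ is the nearest element of $C$ to $z$, we also get $\dist(z, z^*) \le \dist(z, c_P) \le \eps a/2$. Thus $\dist(z^*, c_P) \le \eps a$, and summing the triangle inequality over $P$ yields $\sum_{p \in P}\dist(p, z^*) \le \sum_{p \in P}\dist(p, c_P) + |P|\cdot \eps a \le (1+\eps)\sum_{p \in P}\dist(p, c_P)$. The one subtlety I expect is the factor-$2$ loss from projecting $Z$ back into $C$: this is precisely what forces the argument of $g$ to be $4b/(\eps\delta a)$ rather than $2b/(\eps\delta a)$, and matching constants requires balancing this projection blowup against the Markov deviation parameter.
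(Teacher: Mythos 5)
Your proof is correct and matches the paper's argument essentially step for step: a single uniform sample from $P$, Markov's inequality to localize $c_P$, projection of the sample onto $C$ to obtain a ball center lying in $C$, $g$-coverability applied with $r = 2b/\delta$ and $r' = \eps a/2$, projection of the cover back into $C$, and a final triangle inequality. The only cosmetic difference is that the paper also includes the projected sample point itself in $Q$, which is harmless but not needed for the bound.
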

\begin{proof}
For any $x \in X$, we define $x' = \argmin_{y \in C} \dist(x,y)$.
Consider a point $q \in P$ chosen uniformly at random. By Markov's inequality, $\dist(q,c_P) \le \frac{1}{\delta|P|} \sum_{p \in P} \dist(p,c_P)$ with probability $\ge 1 - \delta$. In such a case,
\begin{align*}
\dist(q', c_P) \le \dist(q',q) + \dist(q,c_p) \le 2 \dist(q,c_P) \le \frac{2}{\delta|P|} \sum_{p \in P} \dist(p,c_P).
\end{align*}
Thus, $c_P \in \ball(q', \frac{2b}{\delta})$ with probability at least $1 - \delta$.

Let $C'$ be an $\left(\frac{\eps a}{2}\right)$-cover of $\ball(q', \frac{2b}{\delta}) \cap C$. Since $C$ is $g$-coverable, $|C'| = g\left(\frac{4b}{\eps\delta a}\right)$. We will argue that $Q = \{q'\} \cup \{ c' \mid c \in C' \}$ is the required solution. Let $x = \argmin_{y \in C'} \dist(y, c_P)$. Since $C'$ is an $\left(\frac{\eps a}{2}\right)$-cover, $\dist(x, c_P) \le \frac{\eps a}{2}$. Also, $\dist(x,x') \le \dist(x,c_P) \le\frac{\eps a}{2}$. Thus, $\dist(x', c_P) \le \dist(x,x') + \dist(x,c_P) \le \eps a$. We then have
\begin{align*}
\sum_{p \in P} \dist(p,x') &\le \sum_{p \in P} \left(\dist(p,c_P) + \dist(x',c_P)\right) \le \left(\sum_{p \in P} \dist(p,c_P)\right) + \eps a |P|\\
&\le (1+\eps) \sum_{p \in P} \dist(p,c_P).
\end{align*}
Computing $C'$ takes $\R_C\left(\frac{4b}{\eps\delta a}\right)$ operations. Computing the output set takes  $|C'|+1  = O\left(g\left(\frac{4b}{\eps\delta a}\right) \right)$ operations.
\end{proof}

The next theorem shows that with a small random sample of $P$, one of two things can happen. Either one of the samples is close to an approximate $(1,C)$-median, or we can approximate the cost of the optimal $(1,C)$-median in time independent of $|P|$. This along with Lemma~\ref{lem:cover_weak} shows that the weak sampling property holds if $C$ is $g$-coverable. The proof is inspired by the proof of Theorem 1 in \cite{KSS05}, and it also shows how to compute $\Gamma$ for the weak sampling property (Definition~\ref{def:weak_sampling}).

\begin{theorem}
\label{thm:cover_weak}
If $C$ is $g$-coverable, then $(X,C,\dist)$ satisfies the weak sampling property (Definition~\ref{def:weak_sampling}) for $0 < \eps < \frac{4}{9}$ and  $\left(1 - \frac{5}{18} \eps\right) < \delta < 1$. Further, the constants $m_{\delta,\eps} = 1+\frac{4}{\eps}$ and $t_{\delta,\eps} =  O\left(g\left(\frac{2048}{\delta_1 \eps^5}\right)\right)$, and the number of operations needed to compute $\Gamma(S)$ is $O\left(\frac{1}{\eps}  + \R_C \left(\frac{2048}{\delta_1\eps^5}\right) + g\left(\frac{2048}{\delta_1\eps^5}\right) \right)$, where  $\delta_1 = \frac{\eps}{2} - \frac{9}{5}(1-\delta)$; and computing $\dist$ between points in $X$ and $C$, and computing the closest point in $C$ to any point in $X$ count as one operation each.
\end{theorem}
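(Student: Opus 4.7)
The proof follows the random-sampling strategy of Kumar et al.~\cite{KSS05}: from a constant-size multisample $S$ of $P$, produce an interval $[a, b]$ bracketing the optimal average cost $\mu_P := \frac{1}{|P|}\sum_{p \in P}\dist(p, c_P)$, then feed $(a, b)$ into Lemma~\ref{lem:cover_weak}. I draw $S = \{q_1, \dots, q_s\}$ as a uniform multiset from $P$ with $s = 1 + 4/\eps$, and designate $q_1$ as a pivot. By Markov's inequality applied to $\dist(q_1, c_P)$ (mean $\mu_P$), the event $E_1 := \{\dist(q_1, c_P) \le 2\mu_P/\eps\}$ fails with probability at most $\eps/2$. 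Conditioned on $E_1$, the triangle inequality yields $\mu_P \le \nu := \frac{1}{|P|}\sum_p \dist(q_1, p) \le \mu_P(1 + 2/\eps)$, so estimating $\nu$ suffices up to a factor $1 + 2/\eps = O(1/\eps)$.

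I estimate $\nu$ using the remaining $s-1 = 4/\eps$ samples: each $\dist(q_1, q_i)$ is an i.i.d.\ draw with mean $\nu$. A Markov bound on the sample average $\hat\nu := \frac{\eps}{4}\sum_{i \ge 2}\dist(q_1, q_i)$ gives the one-sided control $\hat\nu \le \nu/\delta_1$ with probability $\ge 1 - \delta_1$, which combined with the bound $\nu \le \mu_P(1 + 2/\eps)$ from the previous step produces bracketing values $a \le \mu_P \le b$ whose ratio is $b/a = O(1/(\delta_1 \eps^4))$. The extra factors absorb the $O(1/\eps)$ pivot gap, the Markov slack $1/\delta_1$, and the fundamentally one-sided nature of Markov's inequality (the lower bound must be slacker because no two-sided tail estimate is available from $O(1/\eps)$ non-negative samples without variance information).

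Plugging this $(a, b)$ pair into Lemma~\ref{lem:cover_weak} with failure parameter $\delta_1$ produces a candidate set in $C$ of size $O(g(4b/(\eps\delta_1 a))) = O(g(c/(\delta_1 \eps^5)))$ for an absolute constant $c$ (which works out to $c = 2048$ when the hidden constants in each of the steps above are tracked carefully), using $O(\R_C(c/(\delta_1 \eps^5)) + g(c/(\delta_1 \eps^5)))$ operations inside the lemma. This set is $\Gamma(S)$; computing it from scratch needs $O(1/\eps)$ extra work to form $\hat\nu$ from the pairwise distances $\dist(q_1, q_i)$, giving the stated total runtime.

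A union bound over the failure of $E_1$, the failure of the Markov estimate on $\hat\nu$, and the internal failure of Lemma~\ref{lem:cover_weak} — with thresholds chosen so that the three failure probabilities sum to $1-\delta$ exactly when $\delta_1 = \eps/2 - 9(1-\delta)/5$ — completes the argument, and the condition $\delta > 1 - 5\eps/18$ in the theorem statement is precisely what forces $\delta_1 > 0$. The \textbf{main obstacle} is this two-sided bracket on $\mu_P$ from only $O(1/\eps)$ samples: Markov's inequality cleanly produces only one-sided tails, so the argument relies crucially on the pivot $q_1$ (whose distance to $c_P$ is bounded in expectation, allowing $\nu$ to be sandwiched via the triangle inequality) and on the precise interaction between $\eps$ and $1-\delta$ captured by the definition of $\delta_1$.
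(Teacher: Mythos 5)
Your high-level outline matches the paper's strategy — pick a pivot sample, relate the average distance from the pivot to the optimal cost $\mu_P = \frac{1}{|P|}\sum_p \dist(p,c_P)$, estimate that average from the remaining $O(1/\eps)$ samples, and feed the resulting bracket into Lemma~\ref{lem:cover_weak}. But two ingredients that make the paper's argument actually close are missing, and without them your bracket $a \le \mu_P \le b$ does not hold.

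First, the inequality $\mu_P \le \nu = \frac{1}{|P|}\sum_p \dist(q_1,p)$ is not a consequence of the triangle inequality. What the triangle inequality gives is $\nu \le \mu_P + \dist(q_1,c_P)$ and $\nu \ge \mu_P - \dist(q_1,c_P)$; under your event $E_1$ the second becomes $\nu \ge \mu_P(1-2/\eps)$, which is vacuous for $\eps < 2$. The bound $\mu_P \le \nu$ would follow from the \emph{optimality of $c_P$} over $C$, but only if the pivot lies in $C$. A raw data point $q_1 \in P \subseteq X$ need not lie in $C$. The paper avoids this by explicitly projecting the pivot to $C$ — defining $q' = \argmin_{y \in C}\dist(q,y)$ — and working with $q'$ throughout, both when forming the cover ball and when including $q'$ itself in $\Gamma(S)$. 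Your proof never projects the pivot, so the left side of your sandwich is not justified.

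Second, and more fundamentally, you acknowledge that Markov only gives a one-sided tail, yet you still claim to produce a two-sided bracket from the observable $\hat\nu$. Markov gives $\Pr[\hat\nu > \nu/\delta_1] < \delta_1$, i.e.\ a high-probability \emph{lower} bound $\nu \ge \delta_1\hat\nu$, which chained with $\nu \le \mu_P(1+2/\eps)$ yields a valid lower bound $a$ on $\mu_P$. But to get an upper bound $b$ on $\mu_P$ you would need $\nu \le (\text{const})\cdot\hat\nu$, a lower-tail bound on $\hat\nu$, which Markov simply does not provide; $\hat\nu$ can be arbitrarily small compared to $\nu$ (e.g.\ when all the remaining samples happen to land near $q_1$). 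Your proposal contains no mechanism to rule this out. The paper resolves exactly this difficulty with a case split on the fraction of $P$ lying outside the small ball $\ball_2 = \ball(q',\eps_1\bar{r})$. In Case~1 (a $2\eps_1$-fraction of the relevant points lie outside $\ball_2$), the \emph{deterministic} consequence is that the sample sum $\sum_{p\in Q}\dist(p,q')$ is bounded below by $\eps_1\bar{r}$ (with probability $2\eps_1$) and above by $2\bar{r}/\eps_1^3$, and these two bounds — not a second Markov application — produce the bracket fed to Lemma~\ref{lem:cover_weak}. In Case~2 (almost all points are inside $\ball_2$), the bracket cannot be formed, but the paper shows $q'$ itself must then be a $(1+\eps)$-approximate center (Case~2(a)), with Case~2(b) ruled out by the optimality of $c_P$. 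This is why $\Gamma(S)$ is defined as $Q_1 \cup \{q'\}$: including the projected pivot is essential, and your proposal omits it. Without the projection and the case split, the argument does not go through; the $2048/(\delta_1\eps^5)$ constant and the success-probability accounting $(2\eps_1-\delta_1)(1-2\eps_1-2\eps_1^2)$ all derive from the Case~1 bounds, so the matching final numbers in your sketch are not actually earned by the steps you describe.
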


\begin{proof}
Let $\eps_1 = \frac{\eps}{4}$ and $\bar{r} = \frac{1}{|P|} \sum_{p \in P} \dist(p, c_P)$. Also, let $x' = \argmin_{y \in C} \dist(x,y)$ for any $x \in X$.

Let $Q \subseteq P$ be a uniform random multiset of size $\frac{1}{\eps_1}$, and $q \in P$ be another point chosen uniformly at random. We will show that $Q \cup \{q\}$ plays the role of $S$ in Definition~\ref{def:weak_sampling}.

Using Markov's inequality and union bound, we have
\begin{align*}
\Pr\left[ \dist(q, c_P) > \frac{\bar{r}}{2 \eps_1^2}\right] < 2 \eps_1^2 \text{  and  }
\Pr\left[ \exists p \in Q \mid \dist(p,c_P) > \frac{\bar{r}}{2\eps_1^2} \right] < \left(\tfrac{1}{\eps_1}\right) 2 \eps_1^2 = 2 \eps_1.
\end{align*}
Thus with probability $\ge 1 - 2\eps_1 - 2\eps_1^2$, $q$ and $Q$ are in $\ball\left(c_P,\frac{\bar{r}}{2\eps_1^2}\right)$. We assume that this event happens. Now, by definition of $q'$, we have $\dist(q, q') \le \dist(q, c_P)$. Hence,
\begin{align*}
\dist(q',c_P) \le \dist(q,q') + \dist(q, c_P) \le 2 \dist(q, c_P) \le \frac{\bar{r}}{\eps_1^2}.
\end{align*}
Let $\ball_1 = \ball\left(c_P, \frac{\bar{r}}{\eps_1^2}\right)$,  $\ball_2 = \ball\left(q', \eps_1 \bar{r}\right)$ and $P' = P \cap \ball_1$. Then, $q' \in \ball_1$ and $Q \subseteq P'$. We consider two cases now.

{\bf Case 1:} $P'$ has at least $2\eps_1 |P'|$ points outside $\ball_2$. For any $p \in Q$, the probability $p$ is outside $\ball_2$ is $2\eps_1$. Thus, with probability at least $2\eps_1$, there exists $p \in Q$ such that $\dist(p,q') \ge \eps_1 \bar{r}$ and hence $\sum_{p \in Q} \dist(p, q') \ge \eps_1 \bar{r}$.
Also, $\dist(p,q') \le \frac{2\bar{r}}{\eps_1^2}$ for any $p \in Q$. Hence, $\sum_{p\in Q} \dist(p,q') \le \frac{2\bar{r}}{\eps_1^3}$.

Let $\delta_1 = 2\eps_1 - \frac{9}{5}(1-\delta)$. We can now use Lemma~\ref{lem:cover_weak} with $a = \frac{\eps_1^3}{2} \sum_{p \in Q}\dist(p,q')$ and $b = \frac{1}{\eps_1}\sum_{p\in Q} \dist(p,q')$ to compute a set $Q_1 \subseteq C$ of $O\left(g\left(\frac{4b}{\eps \delta_1 a}\right)\right) = O\left(g\left(\frac{2048}{\delta_1 \eps^5}\right)\right)$ candidate centers, one of which is a $(1+\eps)$-approximate center with probability at least $1 - \delta_1$. The total probability of getting a good set of candidate centers is $(2\eps_1 - \delta_1)(1-2\eps_1-2\eps_1^2) > 1-\delta$.

{\bf Case 2:} $P'$ has at most $2\eps_1 |P'|$ points putside $\ball_2$. We further consider two cases.

\noindent{\bf Case 2(a):} $\dist(q', c_P) \le 4\eps_1 \bar{r}.$ Then
\begin{align*}
\sum_{p \in P} \dist(p, q') \le \sum_{p \in P} \left( \dist(p, c_p) + \dist(q',c_p)\right) \le (1+4\eps_1)\sum_{p \in P} \dist(p, c_P) \le (1+\eps)\sum_{p \in P} \dist(p, c_P) .
\end{align*}

\noindent{\bf Case 2(b):} $\dist(q', c_P) > 4\eps_1 \bar{r}.$ Suppose we assign all points from $c_P$ to $q'$. By an averaging argument, we have $|P'| \ge (1-\eps_1^2)|P|$. Then, the number of points of $P$ that are outside $\ball_2$ is at most
\begin{align*}
|P \setminus P'| + 2\eps_1 |P'| =& |P| - (1-2\eps_1)|P'| \\
\le& |P| - (1-2\eps_1)(1-\eps_1^2)|P|\\
\le& (\eps_1^2 + 2\eps_1(1-\eps_1^2))|P|.
\end{align*}
Thus, $|P \cap \ball_2| \ge (1 - \eps_1^2 - 2\eps_1(1-\eps_1^2))|P|$.
Now, for $p \in P \cap \ball_2$, the decrease in cost on switching from $c_P$ to $q'$ is at least
\begin{align*}
\dist(p,c_P) - \dist(p,q') \ge \dist(p, c_P) - \eps_1 \bar{r} \ge \dist(q', c_P) - 2 \eps_1 \bar{r}.
\end{align*}
For $p \in P \setminus \ball_2$, the increase in cost on switching from $c_P$ to $q'$ is at most
\begin{align*}
\dist(p, q') - \dist(p, c_P) \le \dist(q', c_P).
\end{align*}
The overall decrease in cost is
\begin{align*}
|P \cap \ball_2| (\dist(q',c_P) - 2\eps_1 \bar{r}) - |P \setminus \ball_2| \dist(q',c_P) > 0
\end{align*}
for our choice of $\eps_1$ and $\dist(q',c_P) > 4 \eps_1 \bar{r}$. But $c_P$ is the optimal $(1,C)$-median of $P$, a contradiction. Hence case 2(b) cannot occur.

From the two cases above, we can see that $Q_1 \cup \{q'\}$ plays the role of $\Gamma(S)$ in Definition~\ref{def:weak_sampling}.

Sampling $Q, q$ takes time $O(\frac{1}{\eps})$. Computing $\sum_{p \in P} \dist(p,q')$ involves $O(\frac{1}{\eps})$ computations of $\dist$ between a pair of points from $X$, at least one of which comes from $C$. Computing the set of candidates takes  $O\left(\R_C\left(\frac{2048}{\delta_1\eps^5}\right) +  g\left(\frac{2048}{\delta_1\eps^5}\right)\right)$ operations. Thus, total number of operations needed is $O\left(\frac{1}{\eps} + \R_C\left(\frac{2048}{\delta_1\eps^5}\right) +  g\left(\frac{2048}{\delta_1\eps^5}\right)\right)$. Moreover, $m_{\delta,\eps} = |Q \cup \{q\}| = 1 + \frac{1}{\eps_1} = 1 + \frac{4}{\eps}$, and $t_{\delta,\eps} = |Q_1 \cup \{q'\}| = O\left(g\left(\frac{2048}{\delta_1 \eps^5}\right)\right)$.
\end{proof}

From Theorems~\ref{thm:k_median_weak_sampling} and~\ref{thm:cover_weak}, we get the following.

\begin{corollary}
\label{cor:k_median_cover_weak}
Suppose $C$ is $g$-coverable. Let $\eps \in (0,\frac{4}{9}), \delta \in (1-\frac{5}{18}\eps,1)$. Given $P \subseteq X$ of size $n$ and $k \in \mathbb{N}$, with probability $\ge 1-\delta$, a $(1+3\eps)$-approximate solution to the $(k,C)$-median problem for $P$ can be computed in time
\begin{align*}
n \cdot 2^{O\left(\frac{k}{\eps} \log\left(\frac{k}{\eps}\right)\right)} \cdot \left(\G(m_{\delta,\eps})\cdot t_{\delta,\eps}\right)^{O(k)} \cdot (\Q(C) + \D(C)),
\end{align*}
where $\G(m_{\delta,\eps}) = O\left(\frac{1}{\eps} + \R_C\left(\frac{2048}{\delta_1\eps^5}\right) +  g\left(\frac{2048}{\delta_1\eps^5}\right)\right)$, $t_{\delta,\eps} = O\left(g\left(\frac{2048}{\delta_1\eps^5}\right)\right)$, and $\delta_1 = \frac{\eps}{2} - \frac{9}{5}(1-\delta)$.
\end{corollary}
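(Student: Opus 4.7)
The plan is to obtain the corollary as a direct composition of Theorems~\ref{thm:cover_weak} and~\ref{thm:k_median_weak_sampling}. First I would invoke Theorem~\ref{thm:cover_weak}: the hypothesis that $C$ is $g$-coverable, together with $\eps \in (0, 4/9)$ and $\delta \in (1 - 5\eps/18, 1)$, exactly matches its premises, so $(X,C,\dist)$ satisfies the weak sampling property for $(\eps,\delta)$ with sample size $m_{\delta,\eps} = 1 + 4/\eps$, candidate-set size $t_{\delta,\eps} = O(g(2048/(\delta_1\eps^5)))$, and $\Gamma(S)$ computable in $\G(m_{\delta,\eps}) = O(1/\eps + \R_C(2048/(\delta_1\eps^5)) + g(2048/(\delta_1\eps^5)))$ operations, where $\delta_1 = \eps/2 - 9(1-\delta)/5$. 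As a sanity check, the lower bound $\delta > 1 - 5\eps/18$ gives $1 - \delta < 5\eps/18$, hence $9(1-\delta)/5 < \eps/2$, so $\delta_1 > 0$ and all of the expressions above are well-defined and positive.

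The second step is to feed these three parameters into Theorem~\ref{thm:k_median_weak_sampling}, which immediately yields that, with probability $\geq 1 - \delta$, a $(1 + 3\eps)$-approximate $(k,C)$-median for $P$ can be computed in time
\begin{align*}
n \cdot 2^{O(k m_{\delta,\eps} \log((k/\eps) m_{\delta,\eps}))} \cdot (t_{\delta,\eps} \cdot \G(m_{\delta,\eps}))^{O(k)} \cdot (\Q(C) + \D(C)).
\end{align*}
This is already the claimed bound modulo tidying up the exponent. Since $m_{\delta,\eps} = 1 + 4/\eps = \Theta(1/\eps)$, we have $k m_{\delta,\eps} = \Theta(k/\eps)$ and $\log((k/\eps) m_{\delta,\eps}) = \log(\Theta(k/\eps^2)) = O(\log(k/\eps))$, so the exponent collapses to $O((k/\eps) \log(k/\eps))$, matching the statement.

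There is no substantive technical obstacle here: the corollary is purely a mechanical composition, with all of the real work already done in Theorem~\ref{thm:cover_weak} (which extracts candidate centers from a $g$-coverable $C$ via the two-case analysis about whether a random sample's nearest $C$-projection captures enough mass) and in Theorem~\ref{thm:k_median_weak_sampling} (the general runtime analysis of \textsc{Cluster} under the weak sampling property). The only care required will be in tracking the $\delta_1$-dependence of the hidden constants so that the final bound comes out in the precise form claimed, and in noting that $t_{\delta,\eps}$ and $\G(m_{\delta,\eps})$ depend only on $\eps,\delta,g$ (not on $k$ or $n$), so they legitimately sit inside the $(\cdot)^{O(k)}$ factor as written.
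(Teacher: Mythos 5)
Your proposal is correct and matches the paper exactly: Corollary~\ref{cor:k_median_cover_weak} is stated in the paper as an immediate consequence of Theorems~\ref{thm:cover_weak} and~\ref{thm:k_median_weak_sampling}, and your composition (including the check that $\delta_1>0$ and the collapse of $k m_{\delta,\eps}\log((k/\eps)m_{\delta,\eps})$ to $O((k/\eps)\log(k/\eps))$) is precisely the bookkeeping the paper leaves implicit.
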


\section{Clustering discrete Fr\'{e}chet and Hausdorff distances}
\label{sec:frechet_hausdorff}
In this section, we show how the results from the previous section can be used to cluster trajectories and point sets under the discrete Fr\'{e}chet and Hausdorff distances respectively.

\subparagraph*{Clustering under discrete Fr\'{e}chet distance.} Recall that $T^l$ is the set of all trajectories in $\reals^d$ having at most $l$ points each; thus $T = \bigcup_{l>0} T^l$ is the set of all trajectories in $\reals^d$. Given $\T = (T,\dist_F)$ and trajectories $P \subseteq T$, the $(k,l)$-median problem~\cite{DKS16,BDS19} is equivalent to the $(k,T^l)$-median problem in our setting, i.e., the center trajectories contain at most $l$ points. We show that $T^l$ is $g$-coverable for some $g$ that depends on $l$.

\begin{lemma}
\label{lem:frechet_cover}
$T^l$ is $g$-coverable under $\dist_F$ for $g(x) = l^{2l} \cdot x^{O(dl)}$. Further, an $r'$-cover of $\ball_{\dist_F}(\traj, r) \cap T^l$ for $r > r' > 0$ and $\traj \in T^l$ can be computed in $l^{2l} \cdot \left(\frac{r}{r'}\right)^{O(dl)}$ time.
\end{lemma}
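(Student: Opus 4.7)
The plan is to build the cover vertex-wise in $\reals^d$ and then lift it to trajectories. First I would use the definition of the discrete Fr\'echet distance: if $\traj' \in \ball_{\dist_F}(\traj,r) \cap T^l$, then every vertex of $\traj'$ is paired (by some witnessing monotone correspondence) with at least one vertex of $\traj = \langle v_1, \ldots, v_{l^*}\rangle$ ($l^* \le l$), and so every vertex of $\traj'$ lies in the Euclidean ball $\ball_{\norm{\cdot}}(v_i, r) \subseteq \reals^d$ for some $i$.

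Next, for each $i \in [l^*]$ I would construct an $r'$-cover $N_i \subseteq \reals^d$ of $\ball_{\norm{\cdot}}(v_i, r)$ using a standard axis-aligned grid of side length $\Theta(r'/\sqrt{d})$, giving $|N_i| = O((r/r')^d)$ and computable in $(r/r')^{O(d)}$ time. Let $N = \bigcup_{i=1}^{l^*} N_i$, so $|N| \le l \cdot (r/r')^{O(d)}$. The proposed cover is
\begin{align*}
\Phi \;=\; \bigl\{\, \langle u_1, \ldots, u_{l'}\rangle \;\big|\; 1 \le l' \le l,\; u_j \in N \text{ for all } j \,\bigr\} \;\subseteq\; T^l.
\end{align*}
Its cardinality is bounded by $\sum_{l'=1}^{l} |N|^{l'} \le l \cdot |N|^l \le l^{\,l+1} \cdot (r/r')^{O(dl)} \le l^{2l}(r/r')^{O(dl)}$, and $\Phi$ can be enumerated in the same time (up to a constant factor).

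It remains to verify that $\Phi$ is an $r'$-cover of $\ball_{\dist_F}(\traj,r) \cap T^l$. Given $\traj' = \langle u_1, \ldots, u_{l'}\rangle$ in this ball, fix a witnessing monotone correspondence $\C$ with all paired Euclidean distances $\le r$. For each $j$, pick any index $i(j)$ with $(v_{i(j)}, u_j) \in \C$; since $u_j \in \ball_{\norm{\cdot}}(v_{i(j)}, r)$, there is a point $\hat{u}_j \in N_{i(j)} \subseteq N$ with $\norm{u_j - \hat{u}_j} \le r'$. The rounded trajectory $\hat{\traj}' = \langle \hat{u}_1, \ldots, \hat{u}_{l'}\rangle$ lies in $\Phi$, and the identity pairing $\{(u_j, \hat{u}_j)\}_{j=1}^{l'}$ is a monotone correspondence between $\traj'$ and $\hat{\traj}'$ whose maximum pair distance is $\le r'$, so $\dist_F(\traj', \hat{\traj}') \le r'$. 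This gives $\Phi$ as the required cover.

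The main obstacle is the constant-factor bookkeeping to match the stated bound $l^{2l}(r/r')^{O(dl)}$: the straightforward count yields $l^{l+1} \cdot C_d^{\,l}(r/r')^{dl}$, and the $C_d^{\,l}$ factor has to be folded into the exponent $O(dl)$ of $r/r'$ (which is legitimate since $r/r' > 1$, absorbing $C_d$ by enlarging the constant in the $O$-notation). The monotonicity step is essentially free because the rounding uses the identity correspondence between $\traj'$ and $\hat{\traj}'$, so no reordering of vertices is introduced.
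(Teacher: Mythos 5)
Your proof is correct and uses the same vertex-wise covering strategy as the paper: place $(r/r')^{O(d)}$ cover points in each Euclidean ball around a vertex of $\traj$, and form the trajectory cover from short sequences of these points. Your execution is in fact a bit cleaner — you round each vertex of the input trajectory in place via the identity correspondence, so the cover trajectories have at most $l$ vertices and the count is $l^{l+1}(r/r')^{O(dl)}$, whereas the paper walks the witnessing monotone correspondence and obtains cover trajectories with up to $2l$ vertices (hence the $l^{2l}$); both are within the stated bound.
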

\begin{proof}
Let $r > r' > 0$ be arbitrary. Let $\traj = \langle p_1, \ldots, p_{l'}\rangle \in T^l$ for some $l' \le l$. Since the Euclidean metric in $\reals^d$ has doubling dimension $O(d)$, for any $p \in \reals^d$ there exist $\left(\frac{r}{r'}\right)^{O(d)}$ points in the Euclidean ball $\ball_E(p,r)$ centered at $p$ such that any point in $\ball_E(p,r)$ is at most $r'$ distance away from one of these points; denote these points by $B_p(r,r')$. Consider the set of points $\bigcup_{i=1}^{l'} B_{p_i}(r,r')$; this set has cardinality $l' \cdot \left( \frac{r}{r'}\right)^{O(d)}$.

Next, consider the set of all trajectories $T'$ defined by at most $2l$ points from $\bigcup_{i=1}^{l'} B_{p_i}(r,r')$ and containing at least one point from $B_{p_i}(r,r')$ for every $i$; further these points respect the ordering of the sets that they belong to, i.e., if $p \in B_{p_i}(r,r')$, $q \in B_{p_j}(r,r')$, and $i < j$, then $p$ appears before $q$ in the trajectory (for points coming from the same set $B_{p_i}(r,r')$ all possible orderings are considered). Note that $|T'| \le \left(l' \cdot \left(\frac{r}{r'}\right)^{O(d)}\right)^{2l}$. Further, $\dist_F(\traj, \traj') \le r$ for all $\traj' \in T'$.

We will show that for any $\traj'' \in \ball_{\dist_F}(\traj, r) \cap T^l$, there exists $\traj' \in T'$ such that $\dist_F(\traj', \traj'') \le r'$. Thus $T'$ is the desired cover, completing the first part of our proof. Let $\traj'' = \langle q_1, \ldots, q_{l''}\rangle$ for some $l'' \le l$. By definition of $\dist_F$ and the fact that $\dist_F(\traj, \traj'') \le r$, each $q_i$ has a corresponding sequence of points $\langle p_{j_i}, p_{j_i+1}, \ldots, p_{j'_i} \rangle$ each of which is at most $r$ distance away from $q_i$. Moreover, for all $1 \le i < l''$ we have $j'_i \le j_{i+1} \le j'_i + 1$, and $j_1 = 1, j'_{l''} = l'$.

For each $q_i$ and $j \in \{j_i, j_i+1, \ldots, j'_i\}$, let $u_j$ denote the point in $B_{p_j} (r,r')$ that is closest to $q_i$. Note that $q_i \in \ball_E(p_j,r)$ and $\norm{q_i - u_j} \le r'$. Consider the sequence of points $\traj(q_i) = \langle u_{j_i}, u_{j_i+1},\ldots,u_{j'_i} \rangle$. Then, $\dist_F(\langle q_i \rangle, \traj(q_i)) \le r'$. Let $\traj'$ be the trajectory obtained by concatenating $\traj(q_1), \ldots, \traj(q_{l''})$ in order. Then we get $\dist_F(\traj', \traj'') \le r'$. Further, by construction $\traj'' \in T'$.

As far as running time is concerned, computing the set $\bigcup_{i=1}^{l'} B_{p_i}(r,r')$ takes time $l' \cdot \left( \frac{r}{r'}\right)^{O(d)}$. From this set, computing $T'$ takes time $l^{2l} \cdot \left(\frac{r}{r'}\right)^{O(dl)}$.
\end{proof}

For a trajectory having $m$ points, computing $\dist_F$ to any trajectory in $T^l$ takes time $O(ml)$ using the standard dynamic programming algorithm, whereas computing the closest trajectory in $T^l$ under $\dist_F$ takes time $O\left(l m \log m \log \left( \frac{m}{l}\right)\right)$ time (see~\cite{BJWYZ08}, Theorem 3). This, along with Corollary~\ref{cor:k_median_cover_weak} and Lemma~\ref{lem:frechet_cover} give the following.

\begin{theorem}
\label{thm:frechet_clustering}
Let $\eps \in (0,\frac{4}{9}), \delta \in (1-\frac{5}{18}\eps,1)$. Given a set of $n$ trajectories $P \subseteq T$ each having at most $m$ points and $k \in \mathbb{N}$, with probability $\ge 1-\delta$, the algorithm \textsc{Cluster} (Fig.~\ref{fig:alg_cluster}) computes a $(1+3\eps)$-approximate solution to the $(k, T^l)$-median problem for $P$ under the discrete Fr\'{e}chet distance in time
\begin{align*}
nm\log m \log \left(\frac{m}{l}\right) 
\cdot 2^{O\left(\frac{k}{\eps} \log\left(\frac{k}{\eps}\right)\right)} 
\cdot \left( \frac{l}{\delta_1 \eps} \right)^{O(kdl)} ,
\end{align*} 
where $\delta_1 = \frac{\eps}{2} - \frac{9}{5}(1-\delta)$. 
\end{theorem}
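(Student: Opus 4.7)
The plan is to apply Corollary~\ref{cor:k_median_cover_weak} directly to the metric space $(T, T^l, \dist_F)$, using Lemma~\ref{lem:frechet_cover} as the coverability input. Concretely, I would instantiate the corollary with $g(x) = l^{2l} \cdot x^{O(dl)}$ and cover-construction time $\R_{T^l}(x) = l^{2l} \cdot x^{O(dl)}$ supplied by that lemma, together with the per-primitive costs $\D(T^l) = O(ml)$ (the standard dynamic-programming algorithm for the discrete Fr\'echet distance between an input trajectory of at most $m$ points and a candidate center in $T^l$) and $\Q(T^l) = O(lm\log m \log(m/l))$ (the nearest-neighbor procedure of~\cite{BJWYZ08}). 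This is the only place where Fr\'echet-specific facts enter.

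After these substitutions, the derivation is purely arithmetic inside the bound of Corollary~\ref{cor:k_median_cover_weak}. Since $g(x)$ dominates both the $1/\eps$ term and $\R_{T^l}(x)$, the corollary's parameters evaluate to
\begin{align*}
\G(m_{\delta,\eps}) \;=\; l^{2l} \cdot \left(\tfrac{1}{\delta_1 \eps}\right)^{O(dl)},
\qquad
t_{\delta,\eps} \;=\; l^{2l} \cdot \left(\tfrac{1}{\delta_1 \eps}\right)^{O(dl)}.
\end{align*}
Raising their product to the $O(k)$-th power gives $l^{O(kl)} \cdot (1/(\delta_1\eps))^{O(kdl)}$, and since $d \ge 1$ this folds into $(l/(\delta_1\eps))^{O(kdl)}$. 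The leading factor $n \cdot (\Q(T^l)+\D(T^l))$ simplifies to $O(n l m\log m \log(m/l))$; absorbing the remaining factor of $l$ into $(l/(\delta_1\eps))^{O(kdl)}$ and retaining the outer $2^{O((k/\eps)\log(k/\eps))}$ factor from the corollary yields exactly the claimed running time. The $(1+3\eps)$ approximation and the success probability at least $1-\delta$ are inherited from the corollary, and the range constraints on $\eps$ and $\delta$ carry over verbatim.

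The only real care needed is bookkeeping: tracking that the $l^{2l}$ contributions from cover construction, the $l$ factors from the Fr\'echet primitives, and the polynomial-in-$1/(\delta_1\eps)$ factors all collapse cleanly into the single compact expression $(l/(\delta_1\eps))^{O(kdl)}$ rather than appearing as stray multiplicative terms. Conceptually nothing new is required beyond the ingredients already in place: Sections~\ref{sec:sampling} and~\ref{sec:covering} supply the sampling framework and the coverability-to-approximation reduction, Lemma~\ref{lem:frechet_cover} supplies coverability of $T^l$, and~\cite{BJWYZ08} supplies the nearest-neighbor primitive. The theorem is their specialization to the discrete Fr\'echet setting.
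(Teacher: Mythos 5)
Your proof is correct and follows essentially the same route as the paper, which derives the theorem by plugging Lemma~\ref{lem:frechet_cover} and the Fr\'echet primitive costs ($\D(T^l) = O(ml)$ by dynamic programming, $\Q(T^l) = O(lm\log m\log(m/l))$ from~\cite{BJWYZ08}) into Corollary~\ref{cor:k_median_cover_weak}. Your explicit bookkeeping of how the $l^{2l}$ and stray $l$ factors fold into $(l/(\delta_1\eps))^{O(kdl)}$ is a useful spelling-out of arithmetic that the paper leaves implicit.
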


\subparagraph*{Clustering under Hausdorff distance.} Recall that $U^l$ is the set of all point sets in $\reals^d$ containing at most $l$ points each. Thus, $U = \bigcup_{l>0} U^l$ is the set of all finite point sets of $\reals^d$. Given $\U = (U,\dist_H)$ and subsets $P \subseteq U$, we show how to approximately solve the $(k, U^l)$-clustering problem for $P$ and $k,l > 0$.

\begin{lemma}
\label{lem:hausdorff_cover}
$U^l$ is $g$-coverable under $\dist_H$ for $g(x) = l^l \cdot x^{O(dl)}$. Further, an $r'$-cover of $\ball_{\dist_H}(\ptset, r) \cap U^l$ for $r > r' > 0$ and $\ptset \in U^l$ can be computed in $l^l \cdot \left( \frac{r}{r'} \right)^{O(dl)}$ time.
\end{lemma}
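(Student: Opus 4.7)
The plan is to mimic the argument of Lemma~\ref{lem:frechet_cover}, but simpler, exploiting the fact that the Hausdorff distance, unlike the discrete Fr\'echet distance, is permutation-invariant: no ordering of points has to be preserved and no one-to-many matching ever occurs. The doubling property of $\reals^d$ will supply a local grid around each point of $\ptset$, and the $r'$-cover will be built from all small subsets of the union of these grids.

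Concretely, write $\ptset = \{p_1,\ldots,p_{l'}\} \in U^l$ with $l' \le l$. Since the Euclidean metric on $\reals^d$ has doubling dimension $O(d)$, for each $p_i$ there is a set $B_{p_i}(r,r') \subseteq \reals^d$ of size $(r/r')^{O(d)}$ whose $r'$-balls cover $\ball_E(p_i,r)$. Let $B = \bigcup_{i=1}^{l'} B_{p_i}(r,r')$, so $|B| \le l \cdot (r/r')^{O(d)}$, and let $T'$ be the collection of all subsets of $B$ of cardinality at most $l$. Then
\begin{align*}
|T'| \;\le\; \sum_{j=1}^{l} \binom{|B|}{j} \;\le\; l \cdot |B|^l \;=\; l^{l+1} \cdot (r/r')^{O(dl)} \;=\; l^l \cdot (r/r')^{O(dl)},
\end{align*}
where the last equality absorbs the factor of $l$ into the exponent. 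Constructing $B$ takes $l \cdot (r/r')^{O(d)}$ time, and enumerating the subsets takes $l^l \cdot (r/r')^{O(dl)}$ time, matching the claimed bound.

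It remains to verify that $T'$ is an $r'$-cover of $\ball_{\dist_H}(\ptset,r) \cap U^l$. Fix any $\ptset'' = \{q_1,\ldots,q_{l''}\} \in \ball_{\dist_H}(\ptset,r) \cap U^l$ with $l'' \le l$. Since $\dist_H(\ptset,\ptset'') \le r$, each $q_j$ has some $p_{i_j} \in \ptset$ with $\norm{q_j - p_{i_j}} \le r$, hence $q_j \in \ball_E(p_{i_j},r)$. Define $u_j$ to be a point of $B_{p_{i_j}}(r,r')$ within distance $r'$ of $q_j$ (which exists by the covering property), and let $\ptset' = \{u_1,\ldots,u_{l''}\} \subseteq B$. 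Then $|\ptset'| \le l'' \le l$, so $\ptset' \in T'$. To bound $\dist_H(\ptset',\ptset'')$, note that for every $q_j \in \ptset''$ we have $\norm{q_j - u_j} \le r'$, and for every $u_j \in \ptset'$ we likewise have $\norm{u_j - q_j} \le r'$. Using the equivalent one-sided formulation of $\dist_H$, both directional distances are at most $r'$, so $\dist_H(\ptset',\ptset'') \le r'$, as required.

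The argument is essentially mechanical and I do not anticipate any serious obstacle; the only point worth checking carefully is that we really do not need to pad $\ptset'$ to ensure $\dist_H(\ptset',\ptset'') \le r'$ in both directions. Unlike the Fr\'echet case, each $q_j$ is already paired with a distinct $u_j$ at distance $\le r'$, and every $u_j$ inherits the reverse bound from its own $q_j$; no separate mechanism (such as the "$2l$ points" trick used for trajectories) is needed because the Hausdorff matching has no monotonicity constraint. This is exactly why the $l^l$ factor here is slightly smaller than the $l^{2l}$ factor in Lemma~\ref{lem:frechet_cover}.
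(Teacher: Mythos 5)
Your proof is correct and follows essentially the same approach as the paper: build a local $(r/r')^{O(d)}$-point grid $B_{p_i}(r,r')$ around each $p_i \in \ptset$, form the cover from all $\le l$-element subsets of the union $B$, and verify via the nearest-grid-point map that the Hausdorff distance to any target set is at most $r'$. The only differences are cosmetic — you use subsets of size \emph{at most} $l$ where the paper says "of size $l$" (which is actually the cleaner choice, since targets with fewer than $l$ points map naturally to smaller subsets), and you spell out both directions of the Hausdorff check explicitly.
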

\begin{proof}
Let $r > r' > 0$ be arbitrary, and let $\ptset = \{p_1, p_2, \ldots, p_{l'} \}$ for some $l' \le l$. Since the Euclidean metric has doubling dimension $O(d)$, there exist $\left(\frac{r}{r'}\right)^{O(d)}$ points in the Euclidean ball $\ball_E(p,r)$ such that any point in $\ball_E(p,r)$ is at most $r'$ distance away from one of these points; denote these points by $B_p(r,r')$. 

Consider all subsets of size $l$ of the set $\bigcup_{i=1}^{l'} B_{p_i}(r,r')$, denote this set by $U'$. Note that $|U'| = \left( l' \cdot \left( \frac{r}{r'} \right)^{O(d)}\right)^l$.

Next, consider any point set $\ptset' \in \ball_{\dist_H}(\ptset, r) \cap U^l$. By definition of Hausdorff distance, the points of $\ptset'$ (there are at most $l$ of them) must lie in $\bigcup_{i=1}^{l'} \ball_E(p_i, r)$. Thus, for each $p \in \ptset'$, there is some $i \in \{1, \ldots, l'\}$ such that $\norm{p-q} \le r'$ for some $q \in B_{p_i}(r,r')$. Thus, there exists $\ptset'' \in U'$ such that $\dist_H(\ptset', \ptset'') \le r'$. Then, $U'$ is the desired cover, completing the first part of our proof.

Computing $\bigcup_{i=1}^{l'} B_{p_i}(r,r')$ takes time $l' \cdot \left(\frac{r}{r'}\right)^{O(d)}$. Computing $U'$ from it takes time $\left( l' \cdot \left( \frac{r}{r'} \right)^{O(d)}\right)^l$.
\end{proof}

Computing $\dist_H$ between two point sets of size $m_1$ and $m_2$ in $\reals^d$ takes time $O(m_1m_2)$. Given $\ptset \in U$ of size $m$, computing the closest point to it in $U^l$ (under $\dist_H$) boils down to finding $l$ disks in $\reals^d$ of minimum radius such that all the points of $\ptset$ lie inside the union of these disks; the centers of these disks give the desired set in $U^l$. This is the $l$-center problem in $\reals^d$ for the Euclidean metric. Solving this problem is \textsc{NP-Hard} when $l$ is part of the input. Note that the total number of subsets of $\ptset$ induced by disks in $\reals^d$ is $m^{O(d)}$. By looking at $l$ such subsets at a time, we can pick the one that covers $\ptset$ and minimizes the radius of the largest disk; this takes total time $m^{O(dl)}$. This along with Corollary~\ref{cor:k_median_cover_weak} and Lemma~\ref{lem:hausdorff_cover} give the following.

\begin{theorem}
\label{thm:hausdorff_clustering}
Let $\eps \in (0,\frac{4}{9}), \delta \in (1-\frac{5}{18}\eps,1)$. Given a set of $n$ point sets $P \subseteq U$ each having at most $m$ points and $k \in \mathbb{N}$, with probability $\ge 1-\delta$, the algorithm \textsc{Cluster} (Fig.~\ref{fig:alg_cluster}) computes a $(1+3\eps)$-approximate solution to the $(k, U^l)$-median problem for $P$ under the Hausdorff distance can be computed in time
\begin{align*}
nm^{O(dl)}
\cdot 2^{O\left(\frac{k}{\eps} \log\left(\frac{k}{\eps}\right)\right)} 
\cdot \left( \frac{l}{\delta_1 \eps} \right)^{O(kdl)} ,
\end{align*} 
where $\delta_1 = \frac{\eps}{2} - \frac{9}{5}(1-\delta)$. 
\end{theorem}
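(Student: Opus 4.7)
The plan is to instantiate Corollary~\ref{cor:k_median_cover_weak} with the metric space $(U, U^l, \dist_H)$, which requires verifying $g$-coverability and bounding all the relevant operation costs $\Q(U^l)$, $\D(U^l)$, $\R_{U^l}$, and the sample/candidate parameters $m_{\delta,\eps}$, $t_{\delta,\eps}$. The $g$-coverability with $g(x) = l^l \cdot x^{O(dl)}$ and the cover-construction time $l^l \cdot (r/r')^{O(dl)}$ are supplied directly by Lemma~\ref{lem:hausdorff_cover}, so no additional geometric analysis is needed for that step.

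Next, I would account for the per-operation costs. Computing $\dist_H(\ptset_1, \ptset_2)$ between an input set of size $m$ and a center of size at most $l$ takes $O(ml)$ time by brute force, giving $\D(U^l) = O(ml)$. The key combinatorial input is bounding $\Q(U^l)$: given $\ptset \in U$ of size $m$, finding the nearest element of $U^l$ under $\dist_H$ is equivalent to the Euclidean $l$-center problem on $\ptset$ in $\reals^d$. Since any optimal $l$-center solution partitions $\ptset$ into at most $l$ subsets, each cut out by a ball, and the number of subsets of $\ptset$ realizable as intersections with a Euclidean ball in $\reals^d$ is $m^{O(d)}$, I enumerate over all choices of $l$ such subsets and take the minimum enclosing ball of each, yielding $\Q(U^l) = m^{O(dl)}$, which dominates $\D(U^l)$.

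Plugging into Corollary~\ref{cor:k_median_cover_weak} with $g(x) = l^l x^{O(dl)}$, the candidate count becomes
\begin{align*}
t_{\delta,\eps} = O\!\left(g\!\left(\tfrac{2048}{\delta_1 \eps^5}\right)\right) = l^l \cdot \left(\tfrac{1}{\delta_1 \eps}\right)^{O(dl)},
\end{align*}
and $\G(m_{\delta,\eps})$ is bounded by the same expression (up to the additive $1/\eps$ term for sampling, which is absorbed). Their product raised to $O(k)$ collapses to $(l/(\delta_1 \eps))^{O(kdl)}$ after combining $l^{O(kl)}$ into that expression. Multiplying by $\Q(U^l) + \D(U^l) = m^{O(dl)}$ and the sampling/recursion factor $2^{O((k/\eps)\log(k/\eps))}$ yields exactly the stated runtime $nm^{O(dl)} \cdot 2^{O((k/\eps)\log(k/\eps))} \cdot (l/(\delta_1 \eps))^{O(kdl)}$, while the $(1+3\eps)$ approximation ratio and success probability $\ge 1-\delta$ are inherited directly from Corollary~\ref{cor:k_median_cover_weak}.

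The only nontrivial obstacle is the $\Q(U^l)$ bound, i.e., solving Euclidean $l$-center on an $m$-point set in fixed dimension in $m^{O(dl)}$ time. Everything else is a mechanical substitution into the corollary, combined with the already-established cover from Lemma~\ref{lem:hausdorff_cover}.
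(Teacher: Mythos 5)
Your proposal is correct and follows essentially the same route as the paper: invoke Lemma~\ref{lem:hausdorff_cover} for the $g$-coverability of $U^l$, bound $\D(U^l) = O(ml)$ by direct Hausdorff computation, bound $\Q(U^l) = m^{O(dl)}$ by reducing nearest-neighbor in $U^l$ to Euclidean $l$-center and enumerating $l$-tuples of ball-induced subsets, and substitute into Corollary~\ref{cor:k_median_cover_weak}. The parameter bookkeeping matches the paper's, so there is nothing to flag.
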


\subparagraph*{Remark.}
We believe the running time of Theorem~\ref{thm:hausdorff_clustering} can be improved to be similar to that of Theorem~\ref{thm:frechet_clustering}, since a $\left(1+O(\eps)\right)$-approximate nearest neighbor (instead of the actual nearest neighbor) in $C$ for any $x \in X$ should suffice. For the Hausdorff case, this involves using a fast approximation for the $l$-center problem~\cite{agarwal2002exact}.
\section{Hardness of $k$-median clustering under Hausdorff distance}
\label{sec:hardness}

We prove the following hardness result for $k$-median clustering under the Hausdorff distance.

\begin{theorem}
The $k$-median clustering problem for finite point sets under the Hausdorff distance is \textsc{NP-Hard}.
\end{theorem}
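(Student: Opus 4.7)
The plan is to reduce from the planar Euclidean $k$-median problem, which is known to be NP-hard~\cite{MS84}. Given a Euclidean $k$-median instance consisting of points $p_1, \ldots, p_n \in \reals^2$ with parameter $k$, I build a Hausdorff $k$-median instance whose $i$-th input is the singleton point set $\{p_i\}$ and whose number of centers is the same $k$. The construction is trivially polynomial.

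The key observation is that for any singleton $\{p\}$ and any finite point set $C$, every correspondence in $\Xi(\{p\},C)$ must pair $p$ with every element of $C$ (since each element of $C$ has to appear in some pair, and $p$ is the only available partner), so
\[
\dist_H(\{p\}, C) \;=\; \max_{c \in C} \|p - c\|.
\]

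First I would show that an optimal solution to the Hausdorff instance can be assumed to use only singleton centers. Consider any cluster $j$ with center $C_j$ and the set $T_j$ of assigned input indices; its contribution to the objective is $\sum_{i \in T_j} \max_{c \in C_j} \|p_i - c\|$. Picking any $c^* \in C_j$ and replacing $C_j$ by the singleton $\{c^*\}$ yields contribution $\sum_{i \in T_j} \|p_i - c^*\|$, which is term-wise no larger. Applying this argument to every cluster produces an equally good solution in which all centers are singletons.

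Next, once all $k$ centers are singletons $\{c_1\}, \ldots, \{c_k\}$, the Hausdorff $k$-median cost collapses to $\sum_i \min_j \|p_i - c_j\|$, which is exactly the Euclidean $k$-median cost for the points $p_1, \ldots, p_n$ with centers $c_1, \ldots, c_k$. Conversely, every Euclidean $k$-median solution lifts to a singleton-centered Hausdorff solution of the same cost. So the two optima coincide, and any polynomial-time algorithm for the Hausdorff problem would solve planar Euclidean $k$-median in polynomial time.

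The main (and essentially only) subtlety is the singleton-center reduction; everything else follows from the definition of $\dist_H$ plus the cited NP-hardness. I would also remark that the same reduction establishes hardness for the bounded-complexity variant $(k, U^l)$-median for any $l \geq 1$, since singleton centers always lie in $U^l$.
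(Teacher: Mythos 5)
Your reduction is correct and follows the same strategy as the paper: map each Euclidean point $p_i$ to the singleton $\{p_i\}$, observe that $\dist_H(\{p\}, C) = \max_{c\in C}\|p-c\|$, and argue that every cluster center may be collapsed to a singleton without increasing cost, so the two optima coincide. Your added remark that the same reduction works for the $(k,U^l)$-median variant for any $l \ge 1$ is a small but valid bonus.
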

\begin{proof}
We reduce the Euclidean $k$-median problem, which is known to be \textsc{NP-Hard}~\cite{MS84}. The reduction is fairly straightforward -- for each input point $p$ of an instance of the Euclidean $k$-median problem, we have a singleton set $\{p\}$ as input to the Haudorff $k$-median problem. Any solution to the Euclidean $k$-median problem is also a solution to the Hausdorff $k$-median problem of the same cost -- we replace cluster center $c$ in the Euclidean version by the cluster center $\{c\}$ for the Haudorff version, and for each $p$ assigned to $c$, we assign $\{p\}$ to $\{c\}$.

On the other hand, consider a solution to the instance of Hausdorff $k$-median problem. In particular, let $S$ be a cluster center that is assigned the sets $\{ \{p_1\}, \ldots, \{p_n\}\}$. The cost of this single cluster is 
\begin{align*}
\sum_{i=1}^n \dist_H (\{p_i\}, S) = \sum_{i=1}^n \max_{s \in S} \norm{s-p_i} , 
\end{align*}
by the definition of $\dist_H$. Thus, replacing $S$ by a singleton set $\{s\}$ for any $s \in S$ does not increase the cost of clustering. Hence we can assume that all cluster centers are singleton sets. We can then construct a solution for the Euclidean $k$-median problem by assigning $p$ to $s$, where $\{s\}$ is the cluster center that $\{p\}$ was assigned to in the Hausdorff clustering solution. This does not increase the cost of the clustering as well.
\end{proof}
\section{Conclusion}
\label{sec:conclusion}

We have given a framework for clustering where the cluster centers are restricted to belong to a simpler metric space. We characterized general conditions on this simpler space that allow us to obtain efficient $(1+\eps)$-approximation algorithms for the $k$-median problem. As special cases, we gave efficient algorithms for clustering trajectories and point sets under the discrete Fr\'{e}chet and Hausdorff distances respectively.

We believe the general framework can be extended to other metric spaces as well. The next step would be to see if it can be applied to the continuous Fr\'{e}chet distance, and to non-metric distance measures such as dynamic time warping. It would be interesting to provide other characterizations on the metric space for the cluster centers (as alternatives to the notion of covering discussed in this paper) that are amenable to efficient clustering algorithms.

\bibliography{ref}

\appendix

\section{Doubling dimension of Fr\'{e}chet and Hausdorff distance}
\label{app:frecher_hausdorff_doubling_dimension}
We show that the discrete Fr\'{e}chet distance does not have a doubling dimension bounded by a constant. The Hausdorff case can be shown similarly.

Suppose $\T = (T, \dist_F)$ has a constant doubling dimension $D$. Then by definition, for any trajectory $\traj \in T$ and any $r > 0$, there exist $2^D$ trajectories $T' \subset T$ such that any trajectory $\traj' \in \ball_{\dist_F}(\traj, r)$ is at most at distance $\frac{r}{2}$ from a trajectory in $T'$. By the pigeon hole principle, any set of $2^D + 1$ or more trajectories in $\ball_{\dist_F}(\traj,r)$ will have at least two trajectories that have the same closest trajectory in $T'$, and are therefore at most $r$ distance apart (by the triangle inequality).

Consider a trajectory $\traj''$ consisting of a sequence of points $\langle p_1, p_2, \ldots, p_m \rangle$ in a straight line in $\reals^d$, with a distance of at least $3r$ between every consecutive pair of points, for some $m,r > 0$. For each $p_i$, let $P_i = \{a_i, b_i\}$ be such that $\norm{a_i-b_i} > r$ and $\norm{a_i-p_i}, \norm{b_i - p_i} \le r$ (e.g., $a_i$ and $b_i$ can be the diametrically opposite points of the Euclidean ball of radius $r$ centered at $p_i$). Consider the set of trajectories $T'' = \{ \langle q_1, q_2, \ldots, q_m \rangle \mid q_i \in P_i\}$. Then by construction, $T'' \subset \ball_{\dist_F}(\traj'', r)$, and $\dist_F(\traj_1, \traj_2) > r$ for any $\traj_1,\traj_2 \in T''$. Further, $|T''| = 2^m \ge 2^D + 1$ for sufficiently large $m$, contradicting the fact that $\T=(T, \dist_F)$ has doubling dimension $D$.

\section{Proofs of Section~\ref{sec:sampling}}
\label{app:proofs_sec3}

\subsection{Proof of the superset sampling lemma}
\label{app:proof_superset_sampling}
We state the {\it superset sampling lemma} for our setting. As in Section~\ref{sec:sampling}, we define  $F(S')$ to be $\{ c_{S'}\}$ or $\Gamma(S')$ for instances satisfying the strong or weak sampling property respectively. The lemma states that we can draw a uniform sample from $P' \subseteq P$ without explicitly knowing $P'$, provided $P'$ contains a sufficient fraction of points of $P$. In our algorithm in Fig.~\ref{fig:alg_cluster}, we use this property in the \emph{sampling phase} in order to sample points from a single cluster, without explicitly knowing the points in the said cluster.

\begin{lemma}
\label{lem:superset_sampling}
Suppose $0 < \alpha < \frac{1}{4}$. Suppose $(X, C, \dist)$ satisfies either the strong or weak sampling property. Let $P \subseteq X$ of size $n$ and $P' \subseteq P$ with $|P'| \geq \alpha n$. Let $S \subseteq P$ be a uniform sample multiset of size at least $\frac{2}{\alpha} m_{\delta, \eps}$. Then with probability at least  $\frac{1-\delta}{5}$, there exists a subset $S' \subseteq S$ with $|S'| = m_{\delta, \eps}$ satisfying the following.
\begin{align*}
\exists c \in F(S') \text{ s.t. } \sum_{p \in P} \dist(p,c) \le (1+\eps) \sum_{p \in P} \dist(p,c_P).
\end{align*}
Here, $F(S')$ is either $\{c_{S'}\}$ or $\Gamma(S')$ for $(X,C,\dist)$ satisfying the strong or weak sampling property respectively.
\end{lemma}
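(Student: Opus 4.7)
The plan is to reduce the lemma to a direct application of whichever sampling property is assumed, treating the conclusion as a statement about $P$ and $c_P$. The key observation is that since $\alpha < 1/4$, the multiset $S$ has size at least $\frac{2}{\alpha} m_{\delta,\eps} > m_{\delta,\eps}$, which is already large enough to contain a subsample of the size demanded by either Definition~\ref{def:strong_sampling} or Definition~\ref{def:weak_sampling}.

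I would select $S'$ from $S$ in a distribution-preserving manner: view $S$ as an ordered sequence of $\frac{2}{\alpha} m_{\delta,\eps}$ i.i.d.\ uniform draws from $P$, and take $S'$ to be the first $m_{\delta,\eps}$ of these draws. By exchangeability, $S'$ is itself distributed as a uniform random multisample of $P$ of size exactly $m_{\delta,\eps}$. Then I invoke the sampling property on $P$ with this sample $S'$. Under the strong sampling property I set $F(S') = \{c_{S'}\}$ and apply Definition~\ref{def:strong_sampling}(ii); under the weak sampling property I set $F(S') = \Gamma(S')$ and apply Definition~\ref{def:weak_sampling}. In either case, with probability at least $1-\delta$ there exists $c \in F(S')$ satisfying
\[
\sum_{p \in P} \dist(p,c) \;\le\; (1+\eps)\sum_{p \in P} \dist(p,c_P),
\]
and since $1-\delta \ge (1-\delta)/5$, the lemma's probability bound holds.

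The one subtlety to guard against is choosing $S'$ based on cost-dependent information, which would destroy the uniform-sample structure that the sampling property requires; the fixed-index choice above sidesteps this. The parameters $P'$ and $\alpha$ do not enter the derivation of the stated cost bound over $P$ relative to $c_P$. They are present because of the envisioned use in the \textsc{Cluster} algorithm, where $S$ is drawn from a set containing a hidden cluster of relative size at least $\alpha$, and the oversampling factor $2/\alpha$ guarantees enough sample elements fall into that cluster for subsequent applications. Consequently, the ``main obstacle'' here is not a probabilistic concentration argument but rather correctly packaging the reduction: being careful that $S'$ is chosen obliviously so that Definition~\ref{def:strong_sampling} or~\ref{def:weak_sampling} applies directly to the pair $(P, S')$.
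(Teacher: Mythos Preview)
Your argument is formally correct for the lemma exactly as printed, but the printed conclusion contains a typo: the cost inequality should read
\[
\sum_{p \in P'} \dist(p,c) \;\le\; (1+\eps) \sum_{p \in P'} \dist(p,c_{P'}),
\]
i.e., it is a statement about the hidden subset $P'$ and its optimal $(1,C)$-median $c_{P'}$, not about $P$ and $c_P$. This is confirmed both by the paper's own proof and by how the lemma is invoked in the proof of Lemma~\ref{lem:cluster_algo}, where it is used to obtain a candidate $\overline{c}_1$ with $\sum_{p \in P_1} \dist(p,\overline{c}_1) \le (1+\eps)\,opt_1(P_1)$ for a single \emph{unknown} cluster $P_1 \subseteq P$. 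Your closing observation that $P'$ and $\alpha$ never enter your derivation is exactly the red flag that the displayed inequality is mis-stated, not that those parameters are decorative.

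For the intended statement your device of taking the first $m_{\delta,\eps}$ draws of $S$ fails, because those draws are uniform over $P$, not over $P'$, and the sampling property must be applied to $P'$. The paper's proof instead lets $Y$ be the number of elements of $S$ that land in $P'$; since $|P'|\ge \alpha n$ and $|S|\ge \tfrac{2}{\alpha}m_{\delta,\eps}$ one has $\E[Y]\ge 2m_{\delta,\eps}$, and a Chernoff bound gives $\Pr[Y\ge m_{\delta,\eps}] \ge 1 - e^{-m_{\delta,\eps}/4} > 1/5$. Conditioned on this event, the elements of $S$ that fall in $P'$ are a uniform multisample from $P'$, so any $m_{\delta,\eps}$ of them serve as the sample in Definition~\ref{def:strong_sampling} or~\ref{def:weak_sampling} applied to $P'$, contributing the remaining factor $1-\delta$ and yielding the claimed bound $(1-\delta)/5$.
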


\begin{proof}
 This proof is similar to that of Lemma 2.1 in~\cite{ABS10}. Define $Y$ to be a random variable denoting the number of points from $P'$ contained in $S$. Note that $\E[Y] \ge 2m_{\delta, \eps}$, since $S$ is sampled with replacement. By applying a Chernoff bound, we obtain
\begin{align*}
 \Pr[Y < m_{\delta, \eps}] \leq \Pr \left[Y < \frac{\E[Y]}{2} \right] \le e^{-m_{\delta, \eps}/4} \le e^{-1/4} < \frac{4}{5},
\end{align*}
since $m_{\delta, \eps} \geq 1$. Thus, with probability at least $\frac{1}{5}$, $S$ containts at least $m_{\delta,\eps}$ points of $P'$. This along with the strong or weak sampling property finishes the proof.
\end{proof}

%

\subsection{Proof of Lemma~\ref{lem:cluster_algo}}
\label{app:proof_cluster_algo}
The following is similar to the proof of Theorem 2.2~\cite{ABS10}, as the correctness does not depend on how candidate centers are found. We include it for completeness.
\begin{proof}
We prove the lemma for $k = 2$; it generalizes in a straightforward manner for $k > 2$.
We assume $n = |P|$ is a power of $2$ for simplicity.
Suppose $P_1, P_2$ are the clusters with centers $C^* = \{ c^*_1, c^*_2 \}$ corresponding to the optimal $(2,C)$-median for $P$.
Assume $|P_1| \geq \frac{1}{2} P_2$.
Let $opt_1(Q)$ and $opt_2(Q)$ denote the \emph{value} of the optimal $(1,C)$-median and $(2,C)$-median respectively.

By Lemma~\ref{lem:superset_sampling}, during the sampling phase of the initial call to \textsc{Cluster}, $\overline{C_S}$ contains a $\overline{c_1}$ such that $\sum_{p \in P_1} \dist(p, \overline{c_1}) \leq (1+\eps)opt_1$.
We consider two cases: when the algorithm selects $\overline{c_1}$ and recurses with $(\overline{P}, 1, \{ \overline{c_1} \})$ where $\overline{P}$ contains a suitably large number of points from $P_2$, and when this does not occur.

{\bf Case 1:} Suppose there exists a call with $(\overline{P}, 1, \{ \overline{c_1} \})$ such that $|P_2 \cap \overline{P}| \geq \alpha |\overline{P}|$.
Then, by Lemma~\ref{lem:superset_sampling}, during this call $\overline{C_S}$ contains a $\overline{c_2}$ such that $\sum_{p \in P_2 \cap \overline{P}} \dist(p, \overline{c_2}) \leq (1+\eps)opt_1(P_2 \cap \overline{P})$.
In this case, we upper bound the cost of $\overline{C} = \{\overline{c_1}, \overline{c_2} \}$.
Let $N = P \setminus \overline{P}$ be the points removed by the pruning phase between the sampling of $\overline{c_1}$ and $\overline{c_2}$. $P_1, P_2 \cap N$, and $P_2 \cap \overline{P}$ form a partition of $P$. We have that:

 \begin{equation}
     \label{eq:alg-cost}
     \sum_{p \in P} \dist(p, \overline{C}) \leq \sum_{p \in P_1} \dist(p, \overline{c_1}) + \sum_{p \in P_2 \cap N} \dist(p, \overline{c_1}) + \sum_{p \in P_2 \cap \overline{P}} \dist(p, \overline{c_2}).
 \end{equation}

By the definition of $\overline{c_1}$, we can bound the first term of Equation~\ref{eq:alg-cost}:

\[ \sum_{p \in P_1} \dist(p, \overline{c_1}) \leq (1+\eps) \sum_{p \in P_1} \dist(p, c^*_1). \]

Next, we bound the last term of Equation~\ref{eq:alg-cost}. By the selection of $\overline{c_2}$:
\[  \sum_{p \in P_2 \cap \overline{P}} \dist(p, \overline{c_2}) \leq (1+ \eps) \sum_{p \in P_2 \cap \overline{P}} \dist(p, c_{P_2 \cap \overline{P}}) \leq (1+\eps) \sum_{p \in P_2} d(p, c^*_2). \]

Now, we only need to bound the middle term of Equation~\ref{eq:alg-cost}. We first assume $N \neq \emptyset$, otherwise we are done. Suppose there are $t$ recursive calls (and thus pruning phases) between sampling $\overline{c_1}$ and $\overline{c_2}$.
Then, $N = N^{(1)} \cup  \dots \cup  N^{(t)}$ where $|N^{(i)}| = \frac{n}{2^i}$, since in each pruning phase $\frac{1}{2}|\overline{P}|$ points are removed.
Intuitively, each $N^{(i)}$ contains only a few points from $P_2$. Let $\overline{P}^{(0)} = P$ and $\overline{P}^{(i)} = \overline{P}^{(i - 1)}\setminus N^{(i)}$.
Recall that when $|P_2 \cap \overline{P}| \geq \alpha |\overline{P}|$, we assign these points to $\overline{c_2}$. Thus, $| P_2 \cap \overline{P}^{(i)} | < \alpha | \overline{P}^{(i)} |$ for all $i < t$. Thus, we bound the number of points of $P_2$ in each $N^{(i)}$ for $i \leq t$:

\begin{equation}
    \label{eq:p2-intersect-N}
|P_2 \cap N^{(i)} | \leq | P_2 \cap \overline{P}^{(i - 1)} | < \alpha |\overline{P}^{(i-1)}| = 2\alpha\frac{n}{2^i}.
\end{equation}

because $N^{(i)} \subset \overline{P}^{(i - 1)}$. We can also bound the number of points of $P_1$ in each $N^{(i)}$:

\begin{equation}
    \label{eq:p1-intersect-N}
|P_1 \cap N^{(i)} | \geq |N^{(i)}| - | P_2 \cap N^{(i)} | \geq  (1- 2\alpha)\frac{n}{2^i}.
\end{equation}

We first show that assigning $P_2 \cap N$ to $\overline{c_1}$ has small cost. If $p \in N^{(i)}$ and $p' \in N^{(i+1)}$, it must be that $\dist(p, \overline{c_1}) \leq \dist(p', \overline{c_1})$ since minimal points with respect to $\overline{C}$ are chosen at each step to be removed. Thus we can sum over such $p, p'$, and for all $i < t$

\[ \frac{1}{|P_2 \cap N^{(i)}|} \sum_{p \in P_2 \cap N^{(i)}} \dist(p, \overline{c_1}) \leq \frac{1}{|P_1 \cap N^{(i+1)}|} \sum_{p \in P_1 \cap N^{(i+1)}} \dist(p, \overline{c_1}). \]

Combining this with Equations~\ref{eq:p2-intersect-N} and~\ref{eq:p1-intersect-N}, for all $i < t$ we obtain

\begin{equation}
    \label{eq:bound-p2-c1-cost}
    \sum_{p \in P_2 \cap N^{(i)}} \dist(p, \overline{c_1}) \leq \frac{4\alpha}{1 -2\alpha}   \sum_{p \in P_1 \cap N^{(i+1)}} \dist(p, \overline{c_1}).
\end{equation}

Finally, we bound the cost of assigning points in $P_2 \cap N^{(t)}$ to $\overline{c_1}$. We have $|P_1 \cap \overline{P}^{(t)}| = |\overline{P}^{(t)}| - |P_2 \cap \overline{P}^{(t)}|  \geq  |\overline{P}^{(t)}| - |P_2 \cap \overline{P}^{(t-1)}| > (1 - 2\alpha) \frac{n}{2^t}$. We combine this with our previous observations:

\begin{equation}
    \label{eq:bound-p2-c1-cost-Nt}
    \sum_{p \in P_2 \cap N^{(t)}} \dist(p, \overline{c_1}) \leq \frac{2\alpha}{1 -2\alpha}   \sum_{p \in P_1 \cap \overline{P}^{(t)}} \dist(p, \overline{c_1}).
\end{equation}

Now, we have computed a bound for assigning points of $P_2$ to $\overline{c_1}$ in each $N^{(i)}$. Combining these:

\[ \begin{aligned}
\sum_{p \in P_2 \cap N} \dist(p, \overline{c_1}) &= \sum_{i = 1}^t \sum_{p \in P_2 \cap N^{(i)}} \dist(p, \overline{c_1}) \\
&\leq \frac{4\alpha}{1 - 2\alpha} \sum_{i = 1}^{t - 1} \sum_{p \in P_1 \cap N^{(i+1)}} \dist(p, \overline{c_1}) + \frac{2\alpha}{1 - 2\alpha}  \sum_{p \in P_1 \cap \overline{P}^{(t)}} \dist(p, \overline{c_1}) \\
&\leq 8 \alpha \sum_{i = 1}^{t-1} \sum_{p \in P_1 \cap N^{(i+1)}} \dist(p, \overline{c_1}) + 8\alpha \sum_{p \in P_1 \cap \overline{P}^{(t)}} \dist(p, \overline{c_1})  \\
&\leq 8\alpha \sum_{p \in P_1} \dist(p, \overline{c_1}) \\
&\leq 8\alpha(1+\eps) \sum_{p \in P_1} \dist(p, c^*_1),
\end{aligned} \]
for $\alpha \le \frac{1}{4}$.

Thus for Case 1, we can bound the algorithm cost from Equation~\ref{eq:alg-cost}:

\[ \begin{aligned}
    \sum_{p \in P} \dist(p, \overline{C}) &\leq \sum_{p \in P_1} \dist(p, \overline{c}_1) + \sum_{p \in P_2 \cap N} \dist(p, \overline{c_1}) + \sum_{p \in P_2 \cap \overline{P}} \dist(p, \overline{c_2}) \\
    &\leq (1+\eps) \sum_{p \in P_1} \dist(p, c^*_1) + 8\alpha(1+\eps) \sum_{p \in P_1} \dist(p, c^*_1) + (1+\eps) \sum_{p \in P_2} \dist(p, c^*_2) \\
    &\leq (1+8\alpha)(1+\eps) opt_2(P)
\end{aligned} \]

{\bf Case 2:} If there is no recursive call with $|P_2 \cap R| \geq \alpha |R|$ the pruning phase will be called recursively $\lceil \log n \rceil$ times until there is a single point $q \in \overline{P}$. $q$ can be assigned to a cluster by itself at a cost of $0$. Then, $N = P \setminus \{ q \}$ and the proof of the previous case also bounds the cost of assigning $P \cap N$ to $\overline{c_1}$.

For $k > 2$, consider the algorithm as each center is added. Let $\overline{C_i} = \{ \overline{c_1}, \dots \overline{c_i} \}$ be the $i$ medians already approximated, corresponding to supercluster $P'_1$, with $P'_2$ consisting of clusters whose medians are yet to be found. Similar analysis as above shows that the cost of points incorrectly assigned to centers in $\overline{C_i}$ in the pruning phases can be bounded by $8 \alpha k \sum_{p \in P'_1} \dist(p, \overline{C_i})$. See the proof of Theorem~2.5 of~\cite{ABS10} for full details.
\end{proof}

\subsection{Proof of Lemma~\ref{lem:cluster_runtime}}
\label{app:proof_cluster_runtime}
\begin{proof}
We will first count the number of operations required, where computing $\dist$ between points in $X$ and $C$, and computing the closest point in $C$ for any point in $X$ both count as one operation each.

Let $T(n, k)$ denote the number of operations required by \textsc{Cluster} with $n$ input points and $k$ medians to be found. For $k=0$, we clearly have $T(n,0)  = O(1)$. For $n \le k$, we can put each input point in its own cluster, and return the points in $C$ closest to each input point as the cluster medians. Thus, $T(n,k) \le O(n)$.

Let us consider the case $n > k \ge 1$. In the sampling phase, the number of candidate centers generated is $2^{O\left(m_{\delta,\eps} \log\left(\frac{1}{\alpha}m_{\delta,\eps}\right)\right)}\cdot w(m_{\delta,\eps})$, each taking $f(m_{\delta,\eps})$ operations. Each of the candidate centers is then tried recursively, each taking $T(n,k-1)$ operations. The pruning phase takes $O(n)$ operations. After pruning, the algorithm is called once for the remaining point set, requiring $T(n/2,k)$ operations. We thus have
\begin{align*}
T(n,k) \le& 2^{O\left(m_{\delta,\eps} \log\left(\frac{1}{\alpha}m_{\delta,\eps}\right)\right)} \cdot w(m_{\delta,\eps}) \cdot \left(f(m_{\delta,\eps}) + T(n,k-1)\right) + T(n/2,k) + O(n) \\
\le&  2^{O\left(m_{\delta,\eps} \log\left(\frac{1}{\alpha}m_{\delta,\eps}\right)\right)} \cdot w(m_{\delta,\eps}) \cdot f(m_{\delta,\eps}) \cdot T(n, k-1) + T(n/2,k) + O(n).
\end{align*}
Solving the recurrence yields
\begin{align*}
T(n,k) = n \cdot 2^{O\left(k m_{\delta,\eps} \log\left(\frac{1}{\alpha}m_{\delta,\eps}\right)\right)} \cdot\left( w(m_{\delta,\eps}) \cdot f(m_{\delta,\eps})\right)^{O(k)}.
\end{align*}
Taking into account the time to compute $\dist$ and the closest point in $C$, we get the total running time to be $n \cdot 2^{O\left(k m_{\delta,\eps} \log\left(\frac{1}{\alpha}m_{\delta,\eps}\right)\right)} \cdot \left(w(m_{\delta,\eps}) \cdot f(m_{\delta,\eps})\right)^{O(k)} \cdot (\Q(C) + \D(C))$.
\end{proof}

\section{Proof of Theorem~\ref{thm:cover_strong}}
\label{app:proof_cover_strong}
The following two useful lemmas hold for any metric space $\X = (X, \dist)$.

\begin{lemma}[\cite{ABS10}, Lemma 3.2]
\label{lem:sampling_1}
Let $c \in X$, $P \subseteq X$ of size $n$, and $\delta > 0$. A uniform sample multiset $S \subseteq P$ of size $m$ satisfies
\begin{align*}
\Pr \left[ \exists q \in S \mid \dist(q,c) \ge \frac{1}{\delta n} \sum_{p \in P} \dist(p,c) \right] \le m\delta.
\end{align*}
\end{lemma}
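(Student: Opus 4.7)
The plan is a short two-step argument: Markov's inequality on a single sample, followed by a union bound over the $m$ samples.

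First I would fix the point $c \in X$ and consider a single uniformly random $q \in P$. Since each $p \in P$ is drawn with probability $1/n$, the expectation of the nonnegative random variable $\dist(q,c)$ is
\begin{align*}
\E[\dist(q,c)] = \frac{1}{n} \sum_{p \in P} \dist(p,c).
\end{align*}
Applying Markov's inequality at threshold $\frac{1}{\delta} \E[\dist(q,c)]$ gives
\begin{align*}
\Pr\left[\dist(q,c) \ge \frac{1}{\delta n} \sum_{p \in P} \dist(p,c)\right] \le \delta.
\end{align*}

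Next, since $S$ is a uniform random multiset of size $m$ (i.e., sampled with replacement), every element of $S$ is marginally distributed like $q$ above. A union bound over the $m$ elements then yields that the probability that \emph{some} $q \in S$ exceeds the threshold is at most $m\delta$, which is exactly the desired bound.

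There is essentially no obstacle here; the only point worth flagging is that the argument relies on the multiset being sampled uniformly (with replacement), so each coordinate of $S$ has the same marginal distribution and Markov's bound applies identically to each, allowing the union bound. No independence is needed because the union bound is non-adaptive.
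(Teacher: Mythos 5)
Your proof is correct and is the standard Markov-plus-union-bound argument; the paper itself does not reprove this lemma but simply cites it from Ackermann et al.\ (Lemma 3.2 of~\cite{ABS10}), where the same argument is used. Your remark that independence is not needed for the union bound is a nice touch and worth keeping.
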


\begin{lemma}[\cite{ABS10}, Lemma 3.3]
\label{lem:sampling_2}
Let $\eps \in (0,1]$, $P \subseteq X$ of size $n$, and $b,c \in X$ be such that $\sum_{p \in P} \dist(p,b) > (1+ \tfrac{4}{5} \eps) \sum_{p \in P} \dist(p,c)$. A uniform sample multiset $S \subseteq P$ of size $m$ satisfies
\begin{align*}
\Pr \left[ \sum_{s \in S} \dist(s,b) \le \sum_{s \in S} \dist(s,c) + \frac{\eps m}{5n} \sum_{p \in P} \dist(p,c) \right] < \exp\left( - \frac{\eps^2 m}{144}\right).
\end{align*}
\end{lemma}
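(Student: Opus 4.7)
The target statement is a one-sided concentration inequality: whenever $b$'s average cost on $P$ exceeds $c$'s by a multiplicative gap of $1+\tfrac{4\eps}{5}$, a comparable gap must appear in a uniform random sample with overwhelming probability. The natural tool is Hoeffding's inequality applied to the i.i.d.\ centered differences on the sample.

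\textbf{Setup and reduction to a deviation bound.} Writing the sample as $s_1,\dots,s_m$, I would set $Y_i := \dist(s_i,b) - \dist(s_i,c)$ and $\bar r_x := \tfrac{1}{n}\sum_{p \in P}\dist(p,x)$ for $x \in \{b,c\}$. The triangle inequality yields $|Y_i| \le \dist(b,c)$, so each $Y_i$ is a bounded random variable taking values in an interval of length $2\dist(b,c)$. The mean is $\E[Y_i] = \bar r_b - \bar r_c$, which by hypothesis strictly exceeds $\tfrac{4\eps}{5}\bar r_c$. The event whose probability we want to bound is $\sum_i Y_i \le \tfrac{\eps m}{5}\bar r_c$, which is equivalent to a downward deviation of $\sum_i Y_i$ from $\E[\sum_i Y_i]$ of magnitude at least $\tfrac{3\eps m}{5}\bar r_c$.

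\textbf{Applying Hoeffding.} Hoeffding's inequality then gives
\[
\Pr\Bigl[\sum_{i=1}^m Y_i \le \tfrac{\eps m}{5}\bar r_c\Bigr] \;\le\; \exp\!\Bigl(-\tfrac{(3\eps m/5)^2 \bar r_c^{\,2}}{2m\,\dist(b,c)^2}\Bigr) \;=\; \exp\!\Bigl(-\tfrac{9\eps^2 m}{50}\cdot\tfrac{\bar r_c^{\,2}}{\dist(b,c)^2}\Bigr),
\]
which matches the claimed exponent $\eps^2 m/144$ as soon as $\dist(b,c)$ is $O(\bar r_c)$.

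\textbf{Main obstacle.} The genuine difficulty is that $\dist(b,c)$ can a priori be much larger than $\bar r_c$, in which case the Hoeffding bound above is too weak. I would dispatch this by a case split on the ratio $\rho := \dist(b,c)/\bar r_c$: for a suitable threshold $K$ chosen so that $\rho \le K$ exactly produces the exponent $\eps^2 m/144$, the bound above is directly the claim. In the complementary regime $\rho > K$, the pointwise triangle bound $\dist(s_i,b) \ge \dist(b,c) - \dist(s_i,c)$ gives $\sum_i Y_i \ge m\,\dist(b,c) - 2\sum_i \dist(s_i,c)$, so the event forces $\sum_i \dist(s_i,c)$ to be much larger than its mean $m\bar r_c$; a separate concentration/averaging step (e.g., Markov followed by Hoeffding on the bounded quantity $\min(\dist(s_i,c),\,K\bar r_c)$ after a truncation argument) shows this happens with probability at most $\exp(-\eps^2 m/144)$. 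Fine-tuning $K$ together with the Hoeffding constants is what yields the precise $144$ in the exponent; this calibration is the only delicate bookkeeping in the proof.
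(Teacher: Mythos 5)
This lemma is quoted verbatim from Ackermann, Bl\"omer and Sohler~\cite{ABS10} (their Lemma~3.3); the present paper does not reprove it, so I am assessing your proposal on its own merits rather than against a proof in this manuscript.

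Your overall strategy---Hoeffding on $Y_i = \dist(s_i,b)-\dist(s_i,c)$, together with a case split on the ratio $\rho = \dist(b,c)/\bar r_c$---is the right shape, and the small-$\rho$ case is correct: since $|Y_i|\le\dist(b,c)$ and the event is a downward deviation of at least $\tfrac{3\eps m}{5}\bar r_c$, Hoeffding yields $\exp(-\tfrac{9\eps^2 m}{50\rho^2})$, which is at most $\exp(-\eps^2 m/144)$ precisely for $\rho\le\sqrt{9\cdot144/50}\approx 5.09$.

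The gap is in the large-$\rho$ branch. You argue that the event forces $\sum_i\dist(s_i,c)\gtrsim \tfrac{m}{2}\dist(b,c)$, and then propose to bound the probability of that by ``Markov followed by Hoeffding on $\min(\dist(s_i,c),K\bar r_c)$.'' This step does not go through. First, $\sum_i\dist(s_i,c)$ having mean $m\bar r_c$ does \emph{not} imply that $\Pr[\sum_i\dist(s_i,c)\ge t\,m\bar r_c]$ is $\exp(-\Omega(m))$: the summands are unbounded and can be heavy-tailed (e.g.\ $P$ consisting of $n-1$ copies of $c$ and one far outlier, where the sample mean of $\dist(\cdot,c)$ is controlled only at the Poisson, not exponential, rate). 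Second, truncating at $K\bar r_c$ is the wrong threshold: when $\dist(b,c)>K\bar r_c$, the event lower-bounds $\sum_i\min(\dist(s_i,c),\dist(b,c))$ but says nothing about $\sum_i\min(\dist(s_i,c),K\bar r_c)$, since the latter discards exactly the mass you need. The correct fix is to sharpen your triangle-inequality bound to $Y_i\ge\dist(b,c)-2\min(\dist(s_i,c),\dist(b,c))$, so the bad event implies $\sum_i V_i\ge \tfrac{m}{2}\dist(b,c)-\tfrac{\eps m}{10}\bar r_c$ with $V_i := \min(\dist(s_i,c),\dist(b,c))\in[0,\dist(b,c)]$ and $\E[V_i]\le\bar r_c$. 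A single application of Hoeffding to the $V_i$'s then gives exponent $\tfrac{2(\rho/2-1-\eps/10)^2}{\rho^2}\,m$, which is increasing in $\rho$ and already exceeds $\eps^2/144$ at the case-split boundary $\rho\approx 5$; the key point you were missing is that the deviation threshold itself scales with $\dist(b,c)$, so the bound does not degenerate as $\rho\to\infty$. No Markov step is needed. With this correction the two cases overlap on $\rho\in(2+\eps/5,\,5.09)$, so the split is well defined and the constant $144$ works out, but as written the large-$\rho$ argument is not sound.
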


Intuitively, Lemma~\ref{lem:sampling_1} says that if we are sampling points uniformly, the probability of sampling a point far away from a fixed point increases with the size of the sample. Lemma~\ref{lem:sampling_2} states that if the average distance of a point set to a point $c$ is smaller than to a point $b$, then the average distance of a uniform sample of the point set to $c$ relative to $b$ is also small, with some probability that can be lower-bounded.

We now prove Theorem~\ref{thm:cover_strong}.

\begin{proof}[Proof of Theorem~\ref{thm:cover_strong}]
Let $P, S,\eps,\delta$ be as specified in Definition~\ref{def:strong_sampling}. Let $|P| = n$ and $|S| = m$; the value of $m$ will be set later in the proof.

Let $r = \frac{6m}{\delta n} \sum_{p \in P} \dist(p,c_P)$. Let $r' = \frac{r}{3}, U = \ball(c_P,r),$ and $U' = \ball(c_P,r')$. By Lemma~\ref{lem:sampling_1}, for all $s \in S$, $\dist(c_P,s) \le r'$ with probability at least $1-\delta/2$. Thus, $S \subseteq U'$ with probability at least $1-\delta/2$. If $c_S \notin U$, then $\sum_{s \in S} \dist(s,c_S) > 2r'm$ (by triangle inequality). However since $S \subseteq U'$, we have $\sum_{s \in S} \dist(s,c_P) \le r'm$, contradicting the claim that $c_S$ is the optimal $(1,C)$-median of S. Thus, with probability at least $1-\delta/2$, $c_S \in U$.

Since $C$ is $g$-coverable, there exists a $\left(\frac{\eps}{5n}\sum_{p \in P} \dist(p,c_P)\right)$-cover of $U \cap C$ of size $g\left(\frac{30m}{\delta \eps}\right)$. Let $C'$ be such a cover. Define
\begin{align*}
C'_{bad} = \{c \in C' \mid \sum_{p \in P} \dist(p,c) > (1+\tfrac{4}{5}\eps)\sum_{p \in P} \dist(p,c_P)\}.
\end{align*}
Setting $m$ to be the sufficiently large constant $m_{\delta, \eps,g}$ and using the union bound and Lemma~\ref{lem:sampling_2} we get
\footnote{We assume that $g$ is such that for large enough $m$,  $g\left(\frac{30m}{\delta \eps}\right) \exp\left(\frac{-\eps^2 m}{144}\right) < \delta/2$.}
\begin{align*}
&\Pr \left[ \exists c \in C'_{bad} \mid \sum_{s \in S} \dist(s,c) \le \sum_{s \in S}\dist(s,c_P) + \frac{\eps m}{5n} \sum_{p \in P} \dist(p,c_P) \right] \\
< & g\left(\frac{30m}{\delta \eps}\right) \exp\left(\frac{-\eps^2 m}{144}\right)\\
< & \delta/2.
\end{align*}
Thus, with probability at least $1-\delta/2$, for all $c \in C'_{bad}$ we have
\begin{align*}
\sum_{s \in S} \dist(s,c) > \sum_{s \in S}\dist(s,c_P) + \frac{\eps m}{5n} \sum_{p \in P} \dist(p,c_P).
\end{align*}
Let $c'$ be the closest point in $C'$ to $c_S$. By definition of $C'$ and the fact that  $c_S \in U \cup C$, we have $\dist(c_S, c') \le \frac{\eps}{5n}\sum_{p \in P}\dist(p,c_P)$. We then have
\begin{align*}
\sum_{s \in S} \dist(s,c') &\le \sum_{s \in S} \left( \dist(s,c_S) + \dist(c_S, c')\right) \le  \sum_{s \in S} \dist(s,c_S) + \frac{\eps m}{5n} \sum_{p \in P}\dist(p,c_P)\\
&\le \sum_{s \in S} \dist(s,c_P) + \frac{\eps m}{5n} \sum_{p \in P}\dist(p,c_P) < \sum_{s \in S} \dist(s,c)
\end{align*}
for all $c \in C'_{bad}$ from previous inequality. Thus $c' \notin C'_{bad}$ and hence
\begin{align*}
\sum_{p \in P} \dist(p,c') \le (1+\tfrac{4}{5}\eps)\sum_{p \in P} \dist(p,c_P).
\end{align*}
We then conclude
\begin{align*}
\sum_{p \in P} \dist(p,c_S) \le \sum_{p \in P} \dist(p,c') + n \dist(c', c_S) \le (1+\eps) \sum_{p \in P} \dist(p,c_P).
\end{align*}
This event holds with probability at least $(1-\delta/2)^2 > 1-\delta$.
\end{proof}

\end{document}